\newtheorem{theorem}{Theorem}
\newtheorem{lemma}[theorem]{Lemma}
\newtheorem{proposition}[theorem]{Proposition}
\newtheorem{corollary}[theorem]{Corollary}
\newtheorem*{framework*}{Framework}
\title{On the 2D Demand Bin Packing Problem: Hardness and Approximation Algorithms}
\author[1]{Susanne Albers}
\author[2]{Waldo Gálvez}
\author[3]{Ömer Behic Özdemir}
\affil[2]{Universidad de Concepción, Chile, \texttt{wgalvez@inf.udec.cl}}
\affil[1]{Technical University of Munich, Germany, \texttt{albers@in.tum.de}}
\affil[3]{Technical University of Munich, Germany, \texttt{omerbehicozdemir@gmail.com}}
\date{}
\begin{document}

\clearpage
\pagenumbering{arabic}

\maketitle
\begin{abstract}
    We study a two-dimensional generalization of the classical Bin Packing problem, denoted as 2D Demand Bin Packing. In this context, each bin is a horizontal timeline, and rectangular tasks (representing electric appliances or computational requirements) must be allocated into the minimum number of bins so that the sum of the heights of tasks at any point in time is at most a given constant capacity. We prove that simple variants of the problem are NP-hard to approximate within a factor better than $2$, namely when tasks have short height and when they are squares, and provide best-possible approximation algorithms for them; we also present a simple $3$-approximation for the general case. All our algorithms are based on a general framework that computes structured solutions for relatively large tasks, while including relatively small tasks on top via a generalization of the well-known First-Fit algorithm for Bin Packing.
\end{abstract}

\section{Introduction}\label{intro}

In this article, we consider a two-dimensional generalization of the classical Bin Packing problem, denoted as \emph{2D Demand Bin Packing}. An instance of the problem is defined by bins, where each bin corresponds to an axis-parallel rectangular region in the plane of height $C$ and width $T$, and a set of $n$ tasks, where each task $i$ is a rectangle characterized by its height $h_i$ and width $w_i$. The goal is to partition the given tasks into the minimum number of groups such that each group can be feasibly allocated into a bin; in this context, an allocation of a subset $I$ of tasks is a function that specifies, for each task $i$, the coordinate in the $X$-axis where the task starts (i.e., if task $i$ starts at $t\le T-w_i$, it is placed in the interval $(t,t+w_i)$). An allocation into a bin is feasible if, for each $t \in [0,T]$, the total height of tasks intersecting $t$ is at most $C$. See Figure~\ref{fig:gapinstance} for an example.

A closely related problem is the well-studied \emph{2D Geometric Bin Packing} problem~\cite{2d,BK14}, where the input is the same as before but the tasks are required to be embedded as non-overlapping axis-parallel rectangles into the bins\footnote{2D Demand Bin Packing can be thought as a version of 2D Geometric Bin Packing where each rectangle $i$ is replaced by $w_i$ rectangles of height $h_i$ and width $1$, but these slices must be placed into consecutive horizontal positions in the same bin}. The 2D Geometric Bin Packing problem models scenarios where tasks require contiguous and fixed portions of a given resource, such as allocating consecutive frequency or memory locations to given requirements, or cutting pieces out of sheets of some raw material (such as leather, wood, metal, etc.). However, Demand Allocation problems such as 2D Demand Bin Packing have received increasing attention because they better capture scenarios where geometric constraints are not required. For example, in the context of Energy Management, we might think of tasks as electric appliances, whose height corresponds to the amount of energy they consume, and their width corresponds to the amount of time they are required; hence, 2D Demand Bin Packing would correspond to deciding how many periods of $T$ units of time are required to execute all the tasks without overloading the system capacity $C$ (see~\cite{THLW13,ABCGJKLP13} for other examples of similar scenarios in the context of Smart-Grid Scheduling). It is worth remarking that geometric constraints do not play a role here as we only care about the accumulated energy consumption at each point in time; furthermore, enforcing such constraints might lead to inefficient solutions (see for instance Figure~\ref{fig:gapinstance}), suggesting that techniques specifically designed for the Demand Allocation context are required.

It is not difficult to verify that it is NP-hard to decide if a set of given tasks can be allocated into one bin or not (for example via a reduction from the \emph{Partition} problem, encoding the Partition instance into the widths of the tasks and the heights being half the height of the bin). This already implies that, for any $\varepsilon>0$, there is no $(2-\varepsilon)$-approximation for 2D Demand Bin Packing (and also for 2D Geometric Bin Packing). It is known that 2D Geometric Bin Packing admits a $2$-approximation which is best possible unless P=NP~\cite{2d}; however, to the best of our knowledge, 2D Demand Bin Packing has not been explicitly studied before.

	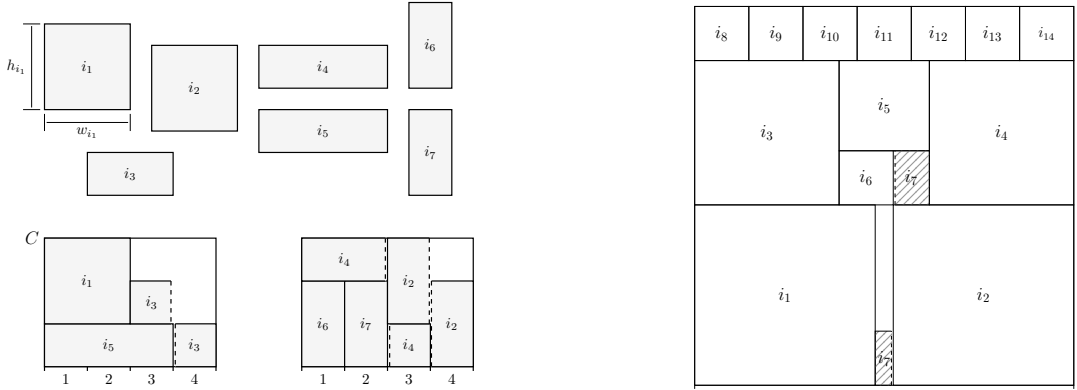
\begin{figure}
    
	    \centering
    
		\scalebox{.57}{\begin{tikzpicture}
			   \draw[thick] (0, 0) -- (0,-0.1);
		   %\draw[thick] (0.5, 0) -- (0.5,-0.1);
		   \draw[thick] (1, 0) -- (1,-0.1);
		   %\draw[thick] (1.5, 0) -- (1.5,-0.1);
		   \draw[thick] (2, 0) -- (2,-0.1);
		   %\draw[thick] (2.5, 0) -- (2.5,-0.1);
		   \draw[thick] (3, 0) -- (3,-0.1);
		   %\draw[thick] (3.5, 0) -- (3.5,-0.1);
		   \draw[thick] (4, 0) -- (4,-0.1);
		   %\draw[thick] (4.5, 0) -- (4.5,-0.1);
		   %\draw[thick] (5, 0) -- (5,-0.1);
		   %\draw[thick] (5.5, 0) -- (5.5,-0.1);
		   \draw[thick] (6, 0) -- (6,-0.1);
		   %\draw[thick] (6.5, 0) -- (6.5,-0.1);
		   \draw[thick] (7, 0) -- (7,-0.1);
		   %\draw[thick] (7.5, 0) -- (7.5,-0.1);
		   \draw[thick] (8, 0) -- (8,-0.1);
		   %\draw[thick] (8.5, 0) -- (8.5,-0.1);
		   \draw[thick] (9, 0) -- (9,-0.1);
		   %\draw[thick] (9.5, 0) -- (9.5,-0.1);
		   \draw[thick] (10, 0) -- (10,-0.1);
		   %\draw[thick] (10.5, 0) -- (10.5,-0.1);
		   
            %\draw (0, -0.3) node {$0 $};
            \draw (0.5, -0.3) node {$1$};
            %\draw (1, -0.3) node {$2$};
            \draw (1.5, -0.3) node {$2$};
            %\draw (2, -0.3) node {$4 $};
            \draw (2.5, -0.3) node {$3$};
            %\draw (3, -0.3) node {$6$};
            \draw (3.5, -0.3) node {$4$};
            %\draw (4, -0.3) node {$8 $};
            %\draw (4.5, -0.3) node {$9$};
            %\draw (5, -0.3) node {$10$};
            %\draw (5.5, -0.3) node {$11 $};	
            %\draw (6, -0.3) node {$12 $};
            \draw (6.5, -0.3) node {$1$};
            %\draw (7, -0.3) node {$14$};
            \draw (7.5, -0.3) node {$2$};
            %\draw (8, -0.3) node {$16 $};
            \draw (8.5, -0.3) node {$3$};
            %\draw (9, -0.3) node {$18$};
            \draw (9.5, -0.3) node {$4$};
           %\draw (10, -0.3) node {$20 $};
            %\draw (10.5, -0.3) node {$21$};        
            
		  \draw[thick] (0,0) rectangle (4,3);
          \draw[thick] (6,0) rectangle (10,3);

            \draw[thick, fill=lightgray!15] (0,6) rectangle (2,8);
            \draw[thick, fill=lightgray!15] (2.5,5.5) rectangle (4.5,7.5);
            \draw[thick, fill=lightgray!15] (5,7.5) rectangle (8,6.5);
            \draw[thick, fill=lightgray!15] (5,6) rectangle (8,5);
            \draw[thick, fill=lightgray!15] (1,4) rectangle (3,5);
            \draw[thick, fill=lightgray!15] (8.5,4) rectangle (9.5,6);
            \draw[thick, fill=lightgray!15] (8.5,6.5) rectangle (9.5,8.5);
  		 
          \draw (-0.1, 8) -- (-0.5,8);
  		  \draw (-0.3, 7.95) -- (-0.3,6.05);
  		  \draw (-0.1, 6) -- (-0.5,6);
  		   \draw (-0.3, 7) node[anchor=east] {$h_{i_1}$};
  		   
          \draw (0, 5.9) -- (0,5.5);
  		  \draw (0.05, 5.7) -- (1.95,5.7);
  		  \draw (2, 5.9) -- (2,5.5);
  		   \draw (1, 5.7) node[anchor=north] {$w_{i_1}$};

           \draw (1,7) node {$i_1$};
           \draw (3.5,6.5) node {$i_2$};
           \draw (2,4.5) node {$i_3$};
           \draw (6.5,7) node {$i_4$};
           \draw (6.5,5.5) node {$i_5$};
           \draw (9,7.5) node {$i_6$};
           \draw (9,5) node {$i_7$};

           \draw (0,3) node[anchor=east] {\large $C$};

           \draw[thick, fill=lightgray!15] (0,0) rectangle (3,1);
           \draw[thick, fill=lightgray!15] (0,1) rectangle (2,3);
           \fill[color=lightgray!15] (2,1) rectangle (2.95,2);
           \fill[color=lightgray!15] (3.05,0) rectangle (4,1);
           \draw[thick, dashed] (2.95,2) -- (2.95,1);
           \draw[thick, dashed] (3.05,1) -- (3.05,0);
           \draw[thick] (2.95,2) -- (2,2) -- (2,1) -- (2.95,1);
           \draw[thick] (3.05,1) -- (4,1) -- (4,0) -- (3.05,0);

           \draw (1,2) node {$i_1$};
           \draw (1.5,0.5) node {$i_5$};
           \draw (2.5,1.5) node {$i_3$};
           \draw (3.5,0.5) node {$i_3$};

           \draw[thick, fill=lightgray!15] (6,0) rectangle (7,2);
           \draw[thick, fill=lightgray!15] (7,0) rectangle (8,2);
           \fill[color=lightgray!15] (6,2) rectangle (7.95,3);
           \fill[color=lightgray!15] (8.05,0) rectangle (9,1);
           \draw[thick, dashed] (7.95,3) -- (7.95,2);
           \draw[thick, dashed] (8.05,0) -- (8.05,1);
           \draw[thick] (7.95,3) -- (6,3) -- (6,2) -- (7.95,2);
           \draw[thick] (8.05,0) -- (9,0) -- (9,1) -- (8.05,1);
           \fill[color=lightgray!15] (8,1) rectangle (8.975,3);
           \fill[color=lightgray!15] (9.025,0) rectangle (10,2);
           \draw[thick,dashed] (8.975,3) -- (8.975,1);
           \draw[thick,dashed] (9.025,2) -- (9.025,0);
           \draw[thick] (8.975,3) -- (8,3) -- (8,1) -- (8.975,1);
           \draw[thick] (9.025,2) -- (10,2) -- (10,0) -- (9.025,0);

           \draw (6.5,1) node {$i_6$};
           \draw (7.5,1) node {$i_7$};
           \draw (7,2.5) node {$i_4$};
           \draw (8.5,0.5) node {$i_4$};
           \draw (8.5,2) node {$i_2$};
           \draw (9.5,1) node {$i_2$};

		\end{tikzpicture}}\hspace{80pt}
        \scalebox{.48}{\begin{tikzpicture}
		   \draw[thick] (0, 0) -- (0,-0.1);
		   %\draw[thick] (0.5, 0) -- (0.5,-0.1);
		   %\draw[thick] (1, 0) -- (1,-0.1);
		   %\draw[thick] (1.5, 0) -- (1.5,-0.1);
		   %\draw[thick] (2, 0) -- (2,-0.1);
		   %\draw[thick] (2.5, 0) -- (2.5,-0.1);
		   %\draw[thick] (3, 0) -- (3,-0.1);
		   %\draw[thick] (3.5, 0) -- (3.5,-0.1);
		   %\draw[thick] (4, 0) -- (4,-0.1);
		   %\draw[thick] (4.5, 0) -- (4.5,-0.1);
		   %\draw[thick] (5, 0) -- (5,-0.1);
		   %\draw[thick] (5.5, 0) -- (5.5,-0.1);
		   %\draw[thick] (6, 0) -- (6,-0.1);
		   %\draw[thick] (6.5, 0) -- (6.5,-0.1);
		   %\draw[thick] (7, 0) -- (7,-0.1);
		   %\draw[thick] (7.5, 0) -- (7.5,-0.1);
		   %\draw[thick] (8, 0) -- (8,-0.1);
		   %\draw[thick] (8.5, 0) -- (8.5,-0.1);
		   %\draw[thick] (9, 0) -- (9,-0.1);
		   %\draw[thick] (9.5, 0) -- (9.5,-0.1);
		   %\draw[thick] (10, 0) -- (10,-0.1);
		   \draw[thick] (10.5, 0) -- (10.5,-0.1);
		   
            %\draw (0, -0.3) node {\large $0 $};
            %\draw (0.5, -0.3) node {$1$};
            %\draw (1, -0.3) node {$2$};
            %\draw (1.5, -0.3) node {$3 $};
            %\draw (2, -0.3) node {$4 $};
            %\draw (2.5, -0.3) node {$5$};
            %\draw (3, -0.3) node {$6$};
            %\draw (3.5, -0.3) node {$7 $};
            %\draw (4, -0.3) node {$8 $};
            %\draw (4.5, -0.3) node {$9$};
            %\draw (5, -0.3) node {$10$};
            %\draw (5.5, -0.3) node {$11 $};	
            %\draw (6, -0.3) node {$12 $};
            %\draw (6.5, -0.3) node {$13$};
            %\draw (7, -0.3) node {$14$};
            %\draw (7.5, -0.3) node {$15 $};
            %\draw (8, -0.3) node {$16 $};
            %\draw (8.5, -0.3) node {$17$};
            %\draw (9, -0.3) node {$18$};
            %\draw (9.5, -0.3) node {$19 $};
            %\draw (10, -0.3) node {$20 $};
            %\draw (10.5, -0.3) node {\Large $21$};        

		  \draw[thick] (0,0) rectangle (10.5,10.5);
  		 \draw (0,0) rectangle (5,5);
  		 \draw (5.5,0) rectangle (10.5,5);
  		 \draw (2.5, 2.5) node {\Large $i_1 $};
  		  \draw (8, 2.5) node {\Large $i_2 $};
  		 \draw (0,5) rectangle (4,9);
  		   \draw (6.5,5) rectangle (10.5, 9);
  		   \draw (0,9) rectangle (1.5, 10.5);
  		    \draw (8.5, 7) node {\Large $i_4 $};
  		     \draw (2, 7) node {\Large $i_3 $};
  		       		   \draw (1.5,9) rectangle (3, 10.5);
  		       		   \draw (3,9) rectangle (4.5, 10.5);
  		       		    \draw (4.5,9) rectangle (6, 10.5);
  		       		    \draw (6,9) rectangle (7.5, 10.5);
  		       		     \draw (7.5,9) rectangle (9, 10.5);
  		       		     \draw (9,9) rectangle (10.5, 10.5);
  		       		     \draw (0.75, 9.75) node {\Large $i_8 $};
  		       		     \draw (2.25, 9.75) node {\Large $i_9 $};
  		       		     \draw (3.75, 9.75) node {\Large $i_{10} $};
  		       		     \draw (5.25, 9.75) node {\Large $i_{11}$};
  		       		     \draw (6.75, 9.75) node {\Large $i_{12}$};
  		       		     \draw (8.25, 9.75) node {\Large $i_{13}$};
  	                    \draw (9.75, 9.75) node {\large $i_{14}$};
          \draw (4,6.5) rectangle (6.5,9);
          \draw (5.25, 7.75) node {\Large $i_{5}$};
          \draw (4,5) rectangle (5.5,6.5);
           \draw (4.75, 5.75) node {\Large $i_{6}$};
          \fill[gray!20, pattern=north east lines, pattern color = gray!80, thick] (5.55, 5) rectangle (6.5,6.5);
           \draw (6, 5.75) node {\Large $i_{7}$};
           \fill[gray!20, pattern=north east lines, pattern color = gray!80, thick] (5, 0) rectangle (5.45,1.5);
           \draw (5.25, 0.75) node {\Large $i_{7}$};
           \draw[dashed] (5.45,0) -- (5.45,1.5);
           \draw[dashed] (5.55,5) -- (5.55,6.5);
           \draw (5.45,0) -- (5,0) -- (5,1.5) -- (5.45,1.5);
           \draw (5.55,5) -- (6.5,5) -- (6.5,6.5) -- (5.5,6.5);

		\end{tikzpicture}}
		\caption{\textbf{(Left)}: Example of an instance of 2D Demand Bin Packing, with parameters $T=4$ and $C=3$, and a feasible solution defined by two bins. \textbf{(Right)}: Example of an instance of 2D Demand Bin Packing of squares into a square bin that fits into one bin, but that does not allow a geometric packing of the squares into the bin (see Appendix~\ref{app:gap_instance}).}
		 \label{fig:gapinstance}

	\end{figure} 

\subsection{Our Results}

We provide best-possible approximation algorithms for two relevant special cases of 2D Demand Bin Packing that preserve the aforementioned hardness of approximation (see Appendix~\ref{app:hardness}): When tasks have height at most $\frac{1}{9}C$, denoted as \emph{2D Demand Bin Packing for short tasks}, and when tasks have equal height and width, denoted as \emph{2D Demand Bin Packing for square tasks}. We also provide a simple $3$-approximation for the general case.

Our results are based on a common framework that consists of two steps: We classify the tasks according to their heights and widths into \emph{large} and \emph{small} (the precise definition will depend on the context). Our main technical contribution states that, as long as large tasks can be allocated into bins so that each bin either has enough total area or has a special structure (namely that the sequence of total loads for each point in time is monotonically non-increasing), and a certain relation between the parameters defining large and small tasks is satisfied, then small tasks can be placed on top of the solution without adding extra bins (see Lemma~\ref{lem:smalltasks}). The algorithm to place small tasks is an adaptation to the two-dimensional case of the classical First-Fit algorithm for Bin Packing, specially tailored for the 2D Demand Allocation context.

The previous framework allows to focus solely on how to place large tasks with the required structure, which in most cases can be achieved by means of classical Bin Packing or 2D Geometric Bin Packing routines. Our results show that the 2D Demand Allocation context indeed allows for simpler algorithms in comparison to its geometric counterparts~\cite{Z05,2d}.

\subsection{Related Results}

The (one-dimensional) Bin Packing problem is one of the most classical and well-studied problems in Combinatorial Optimization. It is known to be strongly NP-hard and, for any $\varepsilon>0$, there cannot be a $\left(\frac{3}{2}-\varepsilon\right)$-approximation unless P=NP~\cite{GJ79}; the algorithms \emph{First-Fit Decreasing} and \emph{Best-Fit Decreasing} reach an approximation ratio of $\frac{3}{2}$, which is best possible~\cite{levi}. The asymptotic setting has also received considerable attention, with well-known approximation schemes~\cite{KK82,VL81}. Towards the big open problem of whether there exists a polynomial-time algorithm that always uses at most $OPT+1$ bins, the current best algorithm due to Hoberg and Rothvoss uses $OPT + O(\log(OPT))$ bins~\cite{HR17}. We refer the readers to the surveys of Coffman et al.~\cite{survey} and Delorme et al.~\cite{DIM16} for a detailed description of algorithmic results for the problem.

Among the possible generalizations of Bin Packing to multiple dimensions, the ones that have received more attention are \emph{Vector Bin Packing} and \emph{Geometric Bin Packing}. In the first problem, both the bins and the items are characterized by $d$-dimensional vectors, while in the second problem the bins and items are d-dimensional hyperrectangles. In the two-dimensional case, while Vector Bin Packing has the same hardness of approximation as Bin Packing, Geometric Bin Packing cannot be approximated within a factor better than $2$ even when the items are squares~\cite{leung}. For both problems, almost tight approximation algorithms are known to exist~\cite{BEK16,2d}, and there have been recent progress on the asymptotic regime~\cite{CC09,BK14,KMS23}.

Another (less straightforward) generalization of Bin Packing to the two-dimensional context is \emph{Strip Packing}. In the Geometric\footnote{In the literature, this problem usually does not include the word ``geometric'', but recent results include it to make a difference with the Demand Allocation version.} Strip Packing problem, we are given a semi-infinite axis-parallel strip of finite width and a set of rectangles, and the goal is to place the rectangles into the strip so that they are all axis-parallel, do not overlap, and the maximum height spanned by the rectangles is minimized. In the Demand Strip Packing problem, the input is the same but no geometric constraint is required, instead we only ask to minimize the maximum accumulated load along the $X$-axis in the strip. While both problems cannot be approximated with a factor better that $\frac{3}{2}$, the current best approximation ratio for Geometric Strip Packing is $\left(\frac{5}{3}+\varepsilon\right)$ due to Harren et al.~\cite{HJPS14}, but it was recently shown by Eberle et al. that Demand Strip Packing admits an almost tight $\left(\frac{3}{2}+\varepsilon\right)$-approximation~\cite{EHRW25}. This shows that the Demand Allocation context is an interesting intermediate step towards a better understanding of Geometric Packing problems. For a detailed description of other Geometric and multidimensional Packing problems, we refer the reader to the survey by Christensen et al.~\cite{CKPT17}.

\subsection{Organization of the paper}

In Section~\ref{sec:prelim}, we provide a formal definition of the problem with useful notation, and our general algorithmic approach. In Section~\ref{sec:specialalgo}, we show how to apply this approach to obtain best-possible algorithms for the case of short tasks and square tasks, and a simple $3$-approximation for the general case.

\section{Preliminaries}\label{sec:prelim}

In this section, we will provide a formal definition of the problem as well as useful notation. The definition we use slightly deviates from the description in the Introduction, but is more practical for our purposes. 

In the 2D Demand Bin Packing problem, we are given bins defined by a timeline that has $T\in \mathbb{N}$ time slots, and each time slot has a capacity $C\in \mathbb{N}$; we are also given a set $I$ of $n$ tasks, where each task $i$ is characterized by its height $h_i \in \mathbb{N}$ and width $w_i\in \mathbb{N}$. A feasible allocation of a subset of tasks into a bin corresponds to a function $\mathcal{A}:I \to \{1,\dots,T\}$ that specifies, for each task $i$,  the time slot where the task starts (i.e., if $\mathcal{A}(i) = t$, with $t\in \{1,\dots,T-w_i\}$, then load $h_i$ is placed into time slots $\{t,\dots,t+w_i-1\}$). An allocation is feasible if, for each time slot $t \in [0,T]$, the total load placed into $t$ is at most $C$ (see Figure~\ref{fig:gapinstance} for an example).

For a task $i$, we define the area of $i$ as $a(i) = h_iw_i$, which naturally extends to a subset $S$ of tasks as $a(S) = \sum_{i\in S}{a(i)}$; an analogous extension will be used for heights $h(S)$ and widths $w(S)$. Throughout this work, for a given instance $I$ of the problem, $OPT(I)$ will denote the optimal number of bins for $I$ (dependence on $I$ will be dropped if clear from the context). Notice that $OPT(I) \ge \lceil a(I) / (T\cdot C) \rceil$, as the total area of tasks that fit in a bin is at most $T\cdot C$. Given a feasible allocation $\mathcal{A}$ of a set of tasks into a bin, we define the \emph{load profile} of $\mathcal{A}$ as a vector of dimension $T$, where each entry $e\in \{1,\dots,T\}$ stores the total load in $e$, denoted as $\ell(e)$. The load profile of an allocation is \emph{sorted} if $\ell(1) \ge \ell(2) \ge \dots \geq \ell(T)$. If the set of tasks allocated into a bin has total area at least $\alpha T \cdot C$, $0\le \alpha\le 1$, we say that the allocation (or the bin) is \emph{$\alpha$-full}. 

\subsection{Our Algorithmic Approach}

%\textbf{Next-Fit Decreasing Height (NFDH):} The tasks are sorted non-increasingly. First task is placed at the bottom-left corner of the bin. Similarly to the first placed tasks of each level, this task is also determines the height of the above level. We try to fit each task into the current level. If it does not fit, we create a new level and pack the task there. The current level becomes this level. In each level, we pack from left to right. If the task does not fit into the bin, a new bin is opened and the task is placed at the bottom-left corner of the bin and this level becomes the current level.
  
One of the building blocks for our algorithms will be an adaptation of the classical \emph{First-Fit} algorithm for Bin Packing. Recall that, in this algorithm, each item is iteratively packed in the leftmost bin where it fits, opening a new bin if it does not fit in any of the currently open bins. We now present an algorithm based on this idea for our problem, which will be useful to allocate relatively small tasks over \emph{structured} solutions.

Suppose we are given a partial solution for a subset of the tasks such that each bin has a sorted load profile. We consider the remaining tasks in an arbitrary but fixed order and, for each possible time slot in the bins starting from left to right, the whole list of tasks is checked according to that order\footnote{The algorithm can be implemented in polynomial time by simply checking the time slots where the load profile changes instead of checking every time slot (see~\cite{GGJK21}).}. If the current task fits in the current position then we allocate it. If there are tasks which could not be placed in any position, the algorithm opens a new bin and starts checking that bin. This procedure will be referred to as the \emph{First-Fit inspired algorithm}. The key point of this algorithm is that, similarly to First-Fit, if we need to open a new bin then we have guarantees on the total area of tasks already placed in the bins, thus bounding their total number.

We will say that a partial solution for a subset of the tasks is $k$-structured, $k\ge 1$, if it uses at most $k\cdot OPT$ bins, and each bin is either $\frac{1}{k}$-full or has a sorted load profile. As the following lemma states, if $k$-structured solutions can be computed for relatively large tasks, then relatively small tasks can be included without requiring more than $k\cdot OPT$ bins in total by means of the First-Fit inspired algorithm, as long as a simple relation between the parameters is satisfied. 

\begin{lemma}\label{lem:smalltasks} 
  If there exists a $k$-structured solution, $k \ge 1$, for tasks with height $h_i > \delta_h C$ or width $w_i > \delta_w T$ from the instance, then it is possible to allocate all the tasks into at most $kOPT$ bins, if $(1-\delta_h)\cdot( 1-\delta_w) \ge \frac{1}{k}$.
 \end{lemma}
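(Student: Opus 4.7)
The plan is to apply the First-Fit inspired algorithm to the small tasks (those with $h_i \le \delta_h C$ and $w_i \le \delta_w T$) on top of the given $k$-structured partial solution, and argue that no new bin is ever opened. Suppose for contradiction that some small task $j$ cannot be placed in any of the $k\cdot OPT$ existing bins at some point during the algorithm. I would show that at this moment every existing bin has total load at least $TC/k$, yielding a total placed area of at least $k\cdot OPT \cdot (TC/k) = OPT \cdot TC$; but this placed area is at most $a(I) - a(j) < a(I) \le OPT \cdot TC$, a contradiction.

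For bins that were already $\tfrac{1}{k}$-full when the small-task phase began, the bound is immediate since loads only grow. The main technical content lies in bins $B$ whose initial guarantee was only a sorted load profile. For such bins I would maintain, by induction on the sequence of small-task placements into $B$, the following invariant: at every stage, the set $\{t : \ell(t) > C(1-\delta_h)\}$ is a prefix $[1, p]$ of the time slots (possibly empty), and the current load profile restricted to $[p+1, T]$ is non-increasing. The base case is just the initial sortedness of $B$. For the inductive step, consider placing a small task $s$ at its leftmost feasible position $t_s$ in $B$. Using the invariant before the placement, one argues $t_s \le p + 1$: otherwise the window $[p+1, p + w_s]$ would lie entirely in $[p+1, T]$ with all loads at most $C(1-\delta_h) \le C - h_s$, so $s$ would already fit at position $p+1$. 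Once $t_s \le p + 1$, adding $h_s$ to the slots in the span of $s$ preserves both parts of the invariant: slots already in $[1, p]$ remain super-heavy, and among the slots of $[p+1, T]$ touched by $s$, those that become super-heavy form an initial segment of $[p+1, T]$ precisely because the profile restricted to that range is non-increasing; the non-increasing property itself is preserved because the span contributes a uniform shift of $h_s$ on its portion of $[p+1, T]$.

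Given the invariant, suppose now that $j$ fails in $B$. Every window of $w_j$ consecutive slots must contain a slot with load strictly greater than $C - h_j \ge C(1-\delta_h)$, i.e., a slot of the prefix $[1, p]$. Applying this to the rightmost valid window $[T - w_j + 1, T]$ forces $p \ge T - w_j + 1 \ge T(1 - \delta_w)$, so the total load in $B$ is at least $p \cdot C(1-\delta_h) \ge T(1-\delta_w) \cdot C(1-\delta_h) = TC(1-\delta_h)(1-\delta_w) \ge TC/k$, as required. The main obstacle is the inductive step for placements whose span crosses the boundary $p$: one must ensure that the super-heavy region is extended \emph{contiguously} rather than by creating an isolated super-heavy region further to the right, and this is precisely guaranteed by the auxiliary non-increasing property on $[p+1, T]$ that the invariant tracks.
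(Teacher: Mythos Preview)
Your proof is correct and follows essentially the same approach as the paper: run a First-Fit style procedure on the sorted-profile bins, maintain a non-increasing-profile invariant, and conclude that if a small task fails then every bin already holds at least $(1-\delta_h)(1-\delta_w)TC\ge TC/k$ area, contradicting $a(I)\le OPT\cdot TC$. Two small remarks: the paper's algorithm iterates over \emph{positions} (at each slot it tries every remaining task), so its invariant is simply ``the profile from the current slot to $T$ is non-increasing,'' which makes your separate super-heavy prefix unnecessary; and the $k$-structured solution may use strictly fewer than $kOPT$ bins, so ``no new bin is ever opened'' is too strong---the correct setup (as in the paper) is to allow opening new empty bins, which trivially have sorted profile, and trigger the contradiction only once the total exceeds $kOPT$.
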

 
 \begin{proof}
Let $B \le kOPT$ be the number of bins from the initial $k$-structured solution whose allocations are $\frac{1}{k}$-full. We will use the First-Fit inspired algorithm to allocate the remaining tasks into the bins with sorted load profiles (possibly opening new bins in the process). Suppose by contradiction that we used more than $kOPT$ bins in total. This implies that tasks in the last bin have height at most $\delta_h C$ and width at most $\delta_w T$. 

We claim that all the bins except possibly for the last one have tasks allocated of total area at least $(C-\delta_h C)(T-\delta_w T)$. To see this, we observe that, up to the point where the last bin was open, the algorithm maintained the following invariant: in the bin currently being considered, the load profile starting from the current slot to the right is non-increasing. The invariant is initially satisfied as the bins have sorted load profiles, and whenever we allocate a task the invariant is maintained as the leftmost slots starting from the current one increase their load by the same amount while the remaining ones to the right are left untouched. Thanks to this invariant, and since the tasks assigned to the last bin were not assigned to the current time slot, every time we finished checking one of the leftmost $T-\delta_w T$ slots in the current bin its total load became larger than $C-\delta_h C$ (otherwise we would have packed any of the tasks in this bin). This implies that the total area of tasks allocated in each of these bins is at least $(1-\delta_h )(1-\delta_w )TC \ge \frac{TC}{k}$. Consequently, the total area placed in the bins so far is larger than $B\cdot\frac{1}{k}TC+(kOPT-B)\frac{TC}{k} \ge OPT \cdot TC$, which is a contradiction.
 \end{proof}

%\begin{lemma}\label{lem:state} 
%When applying the First-Fit inspired algorithm, all the bins except possibly for the last one have tasks allocated of total area at least $(H-h_{last})(W-w_{last})$, where $h_{last}$ (resp. $w_{last}$) is the height (resp. width) of any task assigned to the last bin.
% \end{lemma}

% \begin{proof} We will prove first that, up to the point where the last bin was open, the algorithm maintained the following invariant: in the bin currently being considered, the load profile starting from the current slot to the right is non-increasing. Notice that the invariant is satisfied initially as the bins have non-increasing load profiles, and whenever we allocate a task the invariant is maintained as the leftmost slots starting from the current increase by the same amount while the remaining ones to the right are left untouched. Thanks to this invariant, and since the tasks assigned to the last bin were not assigned to the current one, every time we finished checking one of the leftmost $W-w_{last}$ slots in the current bin its total load became larger than $H-h_{last}$ (otherwise we would have packed it in this bin). This implies that the total area of tasks allocated in each of these bins is at least $(H-h_{last})(W-w_{last})$. \end{proof}

 Thanks to this lemma, we can focus on computing solutions for large tasks with the required structure.

\section{Computing Structured Solutions in 2D Demand Bin Packing}\label{sec:specialalgo}

In this section, we will apply the previously described framework to obtain best-possible approximation algorithms for two special cases of 2D Demand Bin Packing, namely short tasks and square tasks, as well as a simple $3$-approximation for the general case. As mentioned before, we will  first search for structured solutions for \emph{large} tasks so as to pack the remaining \emph{small} tasks using Lemma~\ref{lem:smalltasks}, where the notion of large and small will be specified carefully in each case. We will assume along this section that $OPT$ is known to the algorithm, which holds without loss of generality as $OPT\le n$.

\subsection{2D Demand Bin Packing for short tasks}\label{sec:shorttasks}

We start by showing how to compute $2$-structured solutions for the case of short tasks, the case where $h_i \le \frac{1}{9}C$ for every $i\in\{1,\dots,n\}$. 
\begin{theorem}\label{lem:shortapx}
There is a 2-approximation for 2D Demand Bin Packing for short tasks.
\end{theorem}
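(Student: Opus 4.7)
The plan is to instantiate the framework of Lemma~\ref{lem:smalltasks} with $k=2$, $\delta_h = 1/9$, and $\delta_w = 7/16$; note that $(1-\delta_h)(1-\delta_w) = (8/9)(9/16) = 1/2$, so the lemma's hypothesis is satisfied. Since every task has $h_i \le C/9 = \delta_h C$, the height threshold is never triggered: tasks with $w_i \le 7T/16$ are small (to be added via Lemma~\ref{lem:smalltasks}), while tasks with $w_i > 7T/16$ are large. It remains to produce a $2$-structured allocation of the large tasks.

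For the large tasks, the condition $w_i > 7T/16$ ensures that no three of them fit side-by-side in any bin, since $3 \cdot 7T/16 > T$. I would further subdivide them into \emph{wide} tasks ($w_i > T/2$) and \emph{medium} tasks ($w_i \in (7T/16, T/2]$). All wide tasks in any feasible bin necessarily overlap at $t = T/2$, so their heights sum to at most $C$; thus their packing reduces to a $1$D Bin Packing instance on heights with capacity $C$. Any two medium tasks can be placed side-by-side (widths sum to at most $T$), so each medium-only bin decomposes into at most two independent columns each constrained by $\sum h \le C$. Combined with $h_i \le C/9$, this small-item structure allows classical $1$D Bin Packing subroutines (such as First-Fit Decreasing on heights) to pack wide tasks into bins, placing them left-justified to obtain a non-increasing load profile, and to pack medium tasks by pairing the resulting $1$D bins two per $2$D bin (one medium left, one right).

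The main obstacle is combining these two packings into at most $2 \cdot OPT$ bins in total, as packing them separately could cost roughly $(22/9)\,OPT$. I would address this by first placing the wide tasks, after which the right portion of each such bin has low load, and then greedily inserting medium tasks into those gaps before opening fresh bins; the small bound $h_i \le C/9$ (so that FFD on heights is almost optimal for $1$D Bin Packing), together with the area lower bound $OPT \ge \lceil \sum_i a(i)/(TC) \rceil$ and the fact that wide tasks alone yield a lower bound $\lceil \sum_{\text{wide}} h_i / C \rceil$ on $OPT$, eventually gives a total of at most $2 \cdot OPT$ bins. The left-justified placement guarantees that every bin either is $\frac{1}{2}$-full or has a non-increasing load profile, as required. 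Once this $2$-structured allocation of the large tasks is in place, Lemma~\ref{lem:smalltasks} finishes the argument by adding the small tasks without opening new bins.
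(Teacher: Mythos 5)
Your framework setup is fine and matches the paper's intended use of Lemma~\ref{lem:smalltasks} (the paper takes $\delta_w=\frac13$, giving $\frac{16}{27}>\frac12$; your $\delta_w=\frac{7}{16}$ giving exactly $\frac12$ is equally valid). The problem is that the heart of the theorem --- actually producing a $2$-structured allocation of the tasks of width greater than $\delta_w T$ into at most $2\,OPT$ bins --- is not proved. You correctly identify this as ``the main obstacle'' and then assert that a greedy insertion of medium tasks into the gaps of wide-task bins ``eventually gives'' $2\,OPT$; no such argument is supplied, and it cannot be waved away. To see that the difficulty is real: with $OPT=1$ one has $\sum_{\text{wide}} h_i\le C$ and $\sum_{\text{medium}} h_i\le 2C$, yet First-Fit on heights can produce three medium columns (items of height at most $C/9$ summing to $2C$ need not fit into two capacity-$C$ columns), so the ``separate then pair'' scheme already uses $1+\lceil 3/2\rceil=3>2\,OPT$ bins; the combination step is therefore essential, and it must also handle the fact that a bin containing two side-by-side columns, or wide tasks plus right-justified medium tasks, no longer has a non-increasing load profile and so must be certified $\frac12$-full (your final sentence asserts this dichotomy but never verifies it for these mixed bins --- e.g.\ a bin with one full and one partial medium column has area only about $\frac{7}{18}TC$).

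The paper avoids all of this with a single global construction: sort \emph{all} tasks of width greater than $\frac{T}{3}$ non-increasingly by width, stack them into a virtual strip of height $C\cdot OPT$ as two piles (one anchored at the left boundary from the bottom, one at the right boundary from the top), prove by a counting argument that the piles never overlap, and then slice the strip by horizontal lines at integer multiples of $C$. Because every task has height at most $\frac{C}{9}$, each cut intersects few tasks, and the intersected tasks across all cuts fit into only $\lceil\frac{OPT-1}{9}\rceil+1$ extra bins, giving $OPT+\lceil\frac{OPT-1}{9}\rceil+1\le 2\,OPT$ in total; the $\frac12$-full-or-sorted property is then checked bin by bin. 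If you want to salvage your decomposition into wide and medium tasks, you would need an analogous quantitative argument (e.g.\ exploiting the stronger lower bound $OPT\ge\frac{1}{2C}\sum_{\text{large}}h_i$, which follows from the fact that three tasks of width greater than $\frac{7T}{16}$ cannot be pairwise disjoint in a bin) together with an explicit treatment of the small-$OPT$ additive terms and of the load profiles of mixed bins; as written, the proof is incomplete at exactly the step that carries the content of the theorem.
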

\begin{proof}We will show first how to compute a $2$-structured solution for tasks having width $w_i > \frac{1}{3}T$. This way, the remaining tasks would have height at most $\frac{1}{9}C$ and width at most $w \le \frac{1}{3}T$, thus we obtain a solution that uses at most $2OPT$ bins thanks to Lemma~\ref{lem:smalltasks} since $\left( 1- \frac{1}{9} \right) \left( 1- \frac{1}{3} \right) = \frac{16}{27} > \frac{1}{2}$.

Consider a region of height $C\cdot OPT$ and width $T$. Sort the tasks non-increasingly by width, and let us label them in such a way that $w_1 \ge \dots \ge w_{n'}$. Let us place them one on top of each other in that order, starting at the left boundary of the region, until their total height becomes larger than $C\cdot OPT$ for the first time. We remove the last task assigned $k$, and place the remaining tasks in the same order on the right boundary of the region starting from the top (see Figure~\ref{fig:short2apxf1}). Notice that the total height of tasks having width larger than $\frac{1}{3}T$ is at most $2C\cdot OPT$, so the second group of tasks has total height at most $C\cdot OPT$. We will prove now that this geometric placement is feasible.
	
Suppose by contradiction that two tasks $i'$ and $i^*$, $i'\le i^*$, overlap due to their total width exceeding $1$. Let $I_1=\{1,\dots,i'-1\}$ be the set of tasks below $w_{i'}$, and let $I_2$ be the set of tasks $\{(i' + 1),...,i^*\}$, which have width at least $w_{i^*}$ and total height larger than $2(OPT-h(I_1))$ as task $k$ is also included. Observe that the total height of tasks having width larger than $\frac{T}{2}$ is at most $OPT\cdot C$, and hence $w_{i^*} \le \frac{T}{2}$, implying that the tasks in $I_1$ have width larger than $T - w_{i^*} > \frac{T}{2}$. For any feasible solution to the 2D Demand Bin Packing problem we have that, in each bin, if the total height of tasks from $I_1$ is $p$, then the total height of tasks from $I_2$ in the bin is at most $2(1-p)$ as tasks from $I_1$ and $I_2$ always overlap and their width is larger than $\frac{1}{3}$. This means that the total height of tasks in $I_2$ is at most $2(OPT-h(I_1))$, which is a contradiction. 

We will now derive a packing into $OPT$ bins from the previous construction. Let us initially simply draw horizontal lines at each possible integral height and remove the tasks that are intersected (notice that this is well defined as in the previous packing no task is vertically sliced). This induces a solution for the non-intersected tasks. Furthermore, tasks intersected by $9$ of the boundaries can be allocated into one bin. Hence $\lceil \frac{1}{9}(OPT-1) \rceil$ bins are used to pack these intersected tasks. This induces a packing of $OPT + \lceil \frac{1}{9}(OPT-1) \rceil + 1$ in total ($1$ more for the dropped task $k$). Notice that the bins with intersected tasks are packed as 'shelves', hence being well-structured as their load profiles are non-increasing. Bins having only tasks on the left side have non-increasing profiles too. The remaining bins have tasks both to the left and to the right, but notice that all of them except for one (the topmost ones in the construction) have tasks inside of total area at least $2 \left( 1- \frac{2}{9} \right) \frac{1}{3} = \frac{14}{27} > \frac{1}{2}$ as both the left and right pile completely cover these bins before removing the intersected tasks. It only remains to consider the bin which is partially filled in the right and the dropped task, but in this bin if the right pile inside has total height at least $\frac{7}{9}$ the area of tasks in the bin is at least $\frac{1}{2}$ analogously as before, otherwise they can be placed on top of task $k$, obtaining a non-increasing profile. 

In conclusion, we obtain a feasible well-structured solution using at most $OPT + \lceil \frac{1}{9}(OPT-1) \rceil + 1$. If $OPT=1$, this value is at most $OPT + 1 = 2OPT$. If $OPT \ge 2$, then the used number of bins is at most
\[OPT + \left \lceil \frac{1}{9}(OPT-1) \right \rceil + 1 \le OPT + \frac{1}{9}(OPT-1) + \frac{8}{9} + 1 = \frac{10}{9}OPT + \frac{16}{9} \le 2OPT.\] \end{proof}

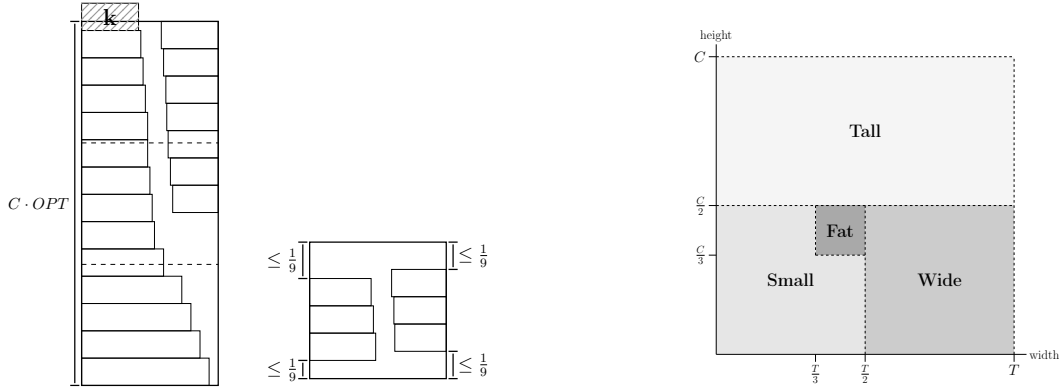
\begin{figure}
	    \centering
        
		\resizebox{0.4\textwidth}{!}{
		\begin{tikzpicture}
		  \draw[thick] (0,0) rectangle (3,8);
  	       \draw (0,0) rectangle (2.8,0.6);
  	       \draw (0,0.6) rectangle (2.6,1.2);
  	       \draw (0,1.2) rectangle (2.4,1.8);
  	       \draw (0,1.8) rectangle (2.2,2.4);
           \draw (0,2.4) rectangle (1.8,3);
         \draw (0,3) rectangle (1.6,3.6);
           \draw (0,3.6) rectangle (1.55,4.2);
           \draw (0,4.2) rectangle (1.5,4.8);
           \draw (0,4.8) rectangle (1.45,5.4);
           \draw (0,5.4) rectangle (1.45,6);
           \draw (0,6) rectangle (1.4,6.6);
           \draw (0,6.6) rectangle (1.35,7.2);
           \draw (0,7.2) rectangle (1.3,7.8);
           \draw (0,7.8) rectangle (1.25,8.4);

           \draw (1.75,7.4) rectangle (3,8);
           \draw (1.8,6.8) rectangle (3,7.4);
           \draw (1.85,6.2) rectangle (3,6.8);
           \draw (1.87,5.6) rectangle (3,6.2);
           \draw (1.9,5) rectangle (3,5.6);
           \draw (1.95,4.4) rectangle (3,5);
           \draw (2,3.8) rectangle (3,4.4);

  		  \draw[dashed] (0, 2.66) -- (3, 2.66);
  		  \draw[dashed] (0, 5.33) -- (3, 5.33);

  	 \fill[pattern = north east lines, pattern color = gray!80] (0, 7.8) rectangle (1.25, 8.4);

 \draw[thick] (-0.05, 8) -- (-0.25,8);
  		  \draw[thick] (-0.15, 7.95) -- (-0.15,0.05);
  		  \draw[thick] (-0.05, 0) -- (-0.25,0);
  		   \draw (-0.15, 4) node[anchor=east] {$C\cdot OPT$};

  		    \draw (0.625, 8.1) node {\Large $\mathbf{k}$};

		\end{tikzpicture}
		\hspace{20pt}
		\begin{tikzpicture}
		 
		 % \draw[thick] (6,0) rectangle (9,3);
  %\draw (6,0.4) rectangle (7.45,1);
           %\draw (6,1) rectangle (7.4,1.6);
           %\draw (6,1.6) rectangle (7.35,2.2);
    
          % \draw (7.8,1.8) rectangle (9,2.4);
  	      %  \draw[thick] (9.05, 1.8) -- (9.25,1.8);
  		  %\draw[thick] (9.15, 1.75) -- (9.15,0.05);
  		  %\draw[thick] (9.05, 0) -- (9.25,0);
  		   %\draw (9.4, 0.9) node {$h$};

  	      \draw[thick] (0,0) rectangle (3,3);
  	      
           \draw (0,0.4) rectangle (1.45,1);
           \draw (0,1) rectangle (1.4,1.6);
           \draw (0,1.6) rectangle (1.35,2.2);

           \draw (1.8,1.8) rectangle (3,2.4);
           \draw (1.85,1.2) rectangle (3,1.8);
           \draw (1.87,0.6) rectangle (3,1.2);

 \draw[thick] (-0.05, 0.4) -- (-0.25,0.4);
  		  \draw[thick] (-0.15, 0.35) -- (-0.15,0.05);
  		  \draw[thick] (-0.05, 0) -- (-0.25,0);
  		   \draw (-0.6, 0.2) node {\large $\le \frac{1}{9}$};

  		 \draw[thick] (-0.05, 3) -- (-0.25,3);
  		  \draw[thick] (-0.15, 2.95) -- (-0.15,2.25);
  		  \draw[thick] (-0.05, 2.2) -- (-0.25,2.2);
  		   \draw (-0.6, 2.6) node {\large $\le \frac{1}{9}$};
  		   
  		    \draw[thick] (3.05, 3) -- (3.25,3);
  		  \draw[thick] (3.15, 2.95) -- (3.15,2.45);
  		  \draw[thick] (3.05, 2.4) -- (3.25,2.4);
  		   \draw (3.6, 2.7) node {\large $\le \frac{1}{9}$};

    \draw[thick] (3.05, 0.6) -- (3.25,0.6);
  		  \draw[thick] (3.15, 0.55) -- (3.15,0.05);
  		  \draw[thick] (3.05, 0) -- (3.25,0);
  		   \draw (3.6, 0.3) node {\large $\le \frac{1}{9}$};
  	
		\end{tikzpicture}}\hspace{70pt}
        \scalebox{0.33}{
		\begin{tikzpicture}
		  \fill[color=lightgray!15] (0,6) rectangle (12,12);
          \fill[color=darkgray!45] (4,4) rectangle (6,6);
          \fill[color=gray!40] (6,0) rectangle (12,6);
          \fill[color=lightgray!40] (0,0) rectangle (4,6);
          \fill[color=lightgray!40] (4,0) rectangle (6,4);
          
            \draw[thick] (0,0) -- (12.5,0);
            \draw[thick] (0,0) -- (0,12.5);
            \draw[thick] (12,0) -- (12,-0.3);
            \draw[thick] (6,0) -- (6,-0.3);
            \draw[thick] (4,0) -- (4,-0.3);
            \draw[thick] (0,12) -- (-0.3,12);
            \draw[thick] (0,6) -- (-0.3,6);
            \draw[thick] (0,4) -- (-0.3,4);
            \draw[dashed] (0,12) -- (12,12);
            \draw[dashed] (12,0) -- (12,12);
            \draw[dashed] (0,6) -- (12,6);
            \draw[dashed] (6,0) -- (6,6);
            \draw[dashed] (4,4) -- (4,6);
            \draw[dashed] (4,4) -- (6,4);

            \draw (12,-0.3) node[anchor=north] {\LARGE $T$};
            \draw (6,-0.3) node[anchor=north] {\LARGE $\frac{T}{2}$};
            \draw (4,-0.3) node[anchor=north] {\LARGE $\frac{T}{3}$};
            \draw (-0.3,12) node[anchor=east] {\LARGE $C$};
            \draw (-0.3,6) node[anchor=east] {\LARGE $\frac{C}{2}$};
            \draw (-0.3,4) node[anchor=east] {\LARGE $\frac{C}{3}$};
            \draw (12.5,0) node[anchor=west] {\Large width};
            \draw (0,12.5) node[anchor=south] {\Large height};

            \draw (6,9) node {\huge \textbf{Tall}};
            \draw (3,3) node {\huge \textbf{Small}};
            \draw (5,5) node {\huge \textbf{Fat}};
            \draw (9,3) node {\huge \textbf{Wide}};
  		  
		\end{tikzpicture}}
		\caption{\textbf{(Left) } Construction of the $2$-structured solution from Lemma~\ref{lem:shortapx}. Dashed lines induce feasible allocations into bins, with an example of a bin next to it. \textbf{(Right) } Classification of the tasks in the $3$-approximation for the general case (Theorem~\ref{theo:3apx}).}
		\label{fig:short2apxf1}
			\end{figure} 

\subsection{2D Demand Bin Packing for square tasks}

In this section, we present a 2-approximation algorithm for square tasks and square bins (i.e. with $T=C$), and this late assumption will be removed later. 

\begin{theorem}\label{thm:2apx-square-square}
There is a $2$-approximation for 2D Demand Bin Packing for square tasks and square bins.
\end{theorem}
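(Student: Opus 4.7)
My plan is to invoke Lemma~\ref{lem:smalltasks} with $\delta_h = \delta_w = 1/4$, which is admissible because $(1 - 1/4)^2 = 9/16 > 1/2$. This reduces the problem to producing a $2$-structured packing for the \emph{large} squares, i.e.\ those with side strictly greater than $C/4$; once this is done, the remaining squares (side at most $C/4$) fit in for free by the lemma.

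For the large squares, I would sort them in non-increasing order of side and pack them into bins via a shelf-based Next-Fit Decreasing Height (NFDH) scheme. A \emph{shelf} is a maximal group of large squares placed consecutively in time starting at $t = 0$, with cumulative width at most $C$, and its \emph{shelf height} equals the side of its first (largest) square. Inside each bin, shelves are stacked so that the sum of the shelf heights is at most $C$, and a new bin is opened whenever the next shelf does not fit. Since the squares are processed in decreasing order of side, within each shelf the squares are arranged from tallest to shortest from left to right, so each shelf individually contributes a non-increasing step function to the bin's load profile; the sum of non-increasing functions is still non-increasing, hence every bin of the resulting packing has a \emph{sorted} load profile and automatically satisfies the structural requirement of a $2$-structured solution.

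The main obstacle is to prove that the number of bins used is at most $2 \cdot OPT$. To this end, I would further split the large squares into \emph{big} squares (side $> C/2$) and \emph{medium} squares (side in $(C/4, C/2]$). Because any feasible bin can contain at most one big square (two would exceed the capacity along either axis), NFDH produces exactly $|\text{bigs}| \le OPT$ single-big-shelf bins. For the bins populated solely by medium shelves, each shelf height lies in $(C/4, C/2]$, so each such bin contains $2$ or $3$ shelves; I would use the standard NFDH invariants---that each closed shelf has width exceeding $C - H_{j+1}$, where $H_{j+1}$ is the side of the first square of the next shelf, and that two consecutive bins have total shelf heights exceeding $C$---combined with an area counting argument to show that the number of medium-only bins is also at most $OPT$, yielding $2 \cdot OPT$ in total.

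The step I expect to be most delicate is the analysis of medium-only bins when the squares have side slightly above $C/3$: there the area density of an optimal packing is only about $4/9 < 1/2$, so a pure area argument against $OPT \ge A/C^2$ is not tight enough and one must combine the width invariant (each closed shelf has width $> C - H_{j+1}$, and $H_{j+1} > C/4$) with the height invariant (consecutive bins have shelf-height sum $> C$) simultaneously, essentially charging each NFDH bin to a distinct structural ``slot'' in the optimum. If this joint combinatorial charging proves too rigid for some configurations, a backup is to strengthen the algorithm by allowing bigs and mediums to share bins whenever the widthwise room next to a big accommodates a medium shelf, which would strictly tighten the count without disturbing the sorted-profile property.
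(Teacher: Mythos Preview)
Your reduction via Lemma~\ref{lem:smalltasks} with $\delta_h=\delta_w=1/4$ is exactly right and matches the paper. The gap is in the treatment of the large squares. Your plan hinges on the claim that the number of medium-only NFDH bins is at most $OPT$, and this claim is false. Take three squares of side $0.4C$ and three squares of side $0.3C$ (all medium under your split). They fit in one bin: stack two $0.4$'s on $[0,0.4C]$ (load $0.8C$), put the third $0.4$ on $[0.4C,0.8C]$, and place the three $0.3$'s on $[0.4C,0.7C]$ (two stacked, load $1.0C$) and $[0.7C,C]$ (load at most $0.7C$); so $OPT=1$. But NFDH forms shelves $\{0.4,0.4\}$, $\{0.4,0.3,0.3\}$, $\{0.3\}$ with heights $0.4,0.4,0.3$; the first two shelves fill bin~1 to height $0.8C$, and the third shelf ($0.3C$) overflows, forcing a second bin. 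Hence medium-only NFDH bins $=2>OPT=1$. Your area/invariant charging cannot close this, because the target inequality itself fails; the phenomenon you flagged (density near $4/9$) is real and fatal to this particular decomposition, not merely delicate.

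The paper sidesteps the issue by splitting at $C/3$ rather than $C/2$. Squares with side in $(C/4,C/3]$ admit at most nine per bin in \emph{any} feasible solution, and nine always fit, so packing them nine at a time uses at most $\lceil |I_1|/9\rceil\le OPT$ bins with no analysis needed. For squares with side larger than $C/3$ the paper does \emph{not} use NFDH; it runs a first-fit style greedy (sort decreasingly, place each square as far left as possible in any open bin, else open a new one) and proves it is exactly optimal via a short combinatorial argument: whenever a second square fits next to a first, any two subsequent (hence no larger) squares also fit in the corners, so at the moment a new bin opens every previous bin holds either one ``unpairable'' square or four squares, matching a lower bound on $OPT$. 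Your backup idea of letting bigs and mediums share bins is in the right spirit, but you would still need a proof that the resulting bin count is at most $2\,OPT$; the paper's $C/3$ split gives that with essentially no work.
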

 
 \begin{proof} We will compute a $2$-structured solution for the tasks having height larger than $\frac{C}{4}$. Having that, we can conclude by applying Lemma~\ref{lem:smalltasks} to allocate the rest since $\left(1-\frac{1}{4}\right) \left(1-\frac{1}{4}\right) = \frac{9}{16} \ge \frac{1}{2}$. 

Consider first the set $I_1$ of tasks having height between $\frac{C}{4}$ and $\frac{C}{3}$. In any feasible solution, each bin contains at most $9$ such tasks, implying that $OPT \ge \lceil |I_1|/9 \rceil$. Since any set of $9$ tasks from $I_1$ fit into a bin, we can allocate these tasks into at most $\lceil |I_1|/9 \rceil$ bins via a simple iterative algorithm. For the set $I_2$ of tasks having height larger than $\frac{C}{3}$, we do the following: We sort the tasks non-increasingly by height, and place each task in any of the current bins as much to the left as possible; if the task does not fit in any bin, we open a new one and allocate the task there. Notice that in any feasible packing, each bin contains at most four of these tasks. If we denote by $k$ be the number of tasks in $I_2$ such that no other task fit together with them in a bin (i.e. tasks of height larger than $1-\min_{i\in I_2}{h_i}$), then $OPT \ge k + \lceil(|I_2|-k)/4\rceil$. 

One important property of the previous procedure is that if a bin has one task and the current task is allocated onto it because it fits, then any pair of subsequent tasks will also fit in the bin: it is possible to place them onto the corners of the bin due to the symmetric nature of their dimensions and then shift them to the left. This implies that whenever the procedure opens a new bin, the previous ones have either one or four tasks inside them. This property allows to prove its correctness as follows: Suppose by contradiction that the procedure opens more than $OPT$ bins, and let $i^*$ to be the first task allocated in the last open bin. As argued before every bin has either one or four tasks when the last bin was open by $i^*$, and the tasks assigned alone have height larger than $1-h_{i^*}$. This implies that if we consider the sequence $I'$ of tasks truncated on $i^*$, then the number of bins required to pack these tasks is at least the number of bins used by the procedure, which is a contradiction as the optimal solution for the general instance uses $OPT$ bins.

Putting the two solutions together, we obtain the desired $2$-structured solution. \end{proof}
 
It is possible to refine this algorithm to obtain a $2$-approximation for 2D Demand Bin Packing for square tasks with arbitrary (rectangular) bins (see Appendix~\ref{app:2apxsquare}).

\subsection{A simple $3$-approximation for 2D Demand Bin Packing}\label{sec:3-apx}

In this section, we present our $3$-approximation algorithm for 2D Demand Bin Packing. We will assume for simplicity that $T=C=1$ by scaling down the instance.

We will first describe a $4$-approximation for the problem, that will later be refined. Consider the optimal solution restricted to tasks with height larger than $\frac{1}{2}$. Since tall tasks cannot share slots, they induce a one dimensional Bin Packing instance defined by their widths that can be packed into at most $OPT$ bins. This means that we can obtain a solution for tall tasks that uses at most $\frac{3}{2}OPT$ bins by using the known $\frac{3}{2}$-approximation for Bin Packing~\cite{levi}. If we consider now the tasks having width larger than $\frac{1}{2}$ that are not large, we can proceed analogously as in every bin they must share the slot in position $\left\lceil\frac{T}{2}\right\rceil$, obtaining again a solution for them that uses at most $\frac{3}{2}OPT$ bins. Since the solution is $4$-structured, we can apply Lemma~\ref{lem:smalltasks} to get a solution that uses at most $4OPT$ bins as $\left( 1-\frac{1}{2}\right)\cdot\left( 1-\frac{1}{2}\right) \ge \frac{1}{4}$.

It can be observed that the solution for relatively large tasks is indeed $3$-structured, but Lemma~\ref{lem:smalltasks} might require extra bins to place relatively small tasks. In order to obtain an improved approximation, we will refine the classification of the tasks as follows (see Figure~\ref{fig:short2apxf1}): A task is \emph{tall} if $h_i > \frac{1}{2}$, \emph{wide} if $h_i \le \frac{1}{2}$ and $w_i > \frac{1}{2}$, \emph{fat} if $\frac{1}{3} < h_i \le \frac{1}{2}$ and $\frac{1}{3} < w_i \le \frac{1}{2}$, and \emph{small} otherwise, i.e. if $h_i \le \frac{1}{3}$ and $w_i\le \frac{1}{3}$. 

Notice that, if we manage to allocate the non-small tasks into at most $3OPT$ bins, there might still be tasks with height or width equal to $\frac{1}{2}$ preventing us from using directly Lemma~\ref{lem:smalltasks}. However, in this case, a task cannot have both height and width larger than $\frac{1}{3}$; the following slight adaptation of the aforementioned lemma will be useful to attain the result.

\begin{lemma}\label{lem:smalltasksm} 
  If it is possible to compute a $3$-structured solution for wide, tall, and fat tasks, then it is possible to compute a feasible solution for the whole instance that uses at most $3OPT$ bins.
 \end{lemma}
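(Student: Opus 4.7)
The plan is to run the First-Fit inspired algorithm from the proof of Lemma~\ref{lem:smalltasks} in two successive passes, one for each natural half of the small tasks, and to re-use its area argument with two different parameter pairs that both yield $(1-\delta_h)(1-\delta_w) = 1/3$.

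First I would partition the small tasks into $S_1 = \{i \text{ small} : h_i \le C/3\}$ and $S_2 = \{i \text{ small} : h_i > C/3\}$. Since a small task is, by assumption, not fat, every $i \in S_2$ satisfies $w_i \le T/3$, while every $i \in S_1$ satisfies $w_i \le T/2$ by smallness. Thus $S_1$ tasks are ``short in height'' and $S_2$ tasks are ``thin in width''.

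Starting from the given $3$-structured solution for tall, wide, and fat tasks, I would first run First-Fit on $S_1$ with parameters $(\delta_h, \delta_w) = (1/3, 1/2)$, and then run First-Fit on $S_2$ with the swapped parameters $(\delta_h, \delta_w) = (1/2, 1/3)$, in both cases restricted to bins with a sorted load profile. In each pass, the hypothesis $(1-\delta_h)(1-\delta_w) = 1/3 = 1/k$ matches the threshold in Lemma~\ref{lem:smalltasks} with $k = 3$, and the proof of that lemma applies to that pass essentially unchanged: if any bin beyond the current $3 \cdot OPT$ were opened, then every processed sorted bin would have final area strictly greater than $TC/3$, every $1/3$-full bin would contribute at least $TC/3$, and the total area of allocated tasks would strictly exceed $OPT \cdot TC$, contradicting $a(I) \le OPT \cdot TC$.

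The two passes compose cleanly because First-Fit preserves the sorted load-profile invariant on every bin it modifies and leaves the $1/3$-full bins untouched; hence after Pass~1 the partial solution is still $3$-structured, now for tall, wide, fat, and $S_1$ tasks, which is exactly what Pass~2 requires as input. The only subtle point, rather than a genuine obstacle, will be to verify that the proof of Lemma~\ref{lem:smalltasks} does not implicitly use the symmetry $\delta_h = \delta_w$: its core inequality $\ell(t) > C - \delta_h C$ at each slot $t \le T - \delta_w T$ employs the unplaced task's width bound (to guarantee a horizontal fit from slot $t$) and its height bound (to deduce the load lower bound) independently, so both asymmetric choices of $(\delta_h, \delta_w)$ above go through without modification.
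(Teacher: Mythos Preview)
Your two-pass plan hinges on the claim that ``First-Fit preserves the sorted load-profile invariant on every bin it modifies,'' and this claim is false. The invariant proved in Lemma~\ref{lem:smalltasks} is only that the profile is non-increasing \emph{from the currently processed slot to the right}; once the algorithm moves past a slot, later placements can push a subsequent slot above it. Concretely, take $C=9$, $T=10$, a sorted bin with initial profile $(5,0,\dots,0)$ (one tall task), and four $S_1$-tasks each of height~$3$ and width~$1$. At slot~$1$ exactly one task fits ($\ell(1)=8$), the other three are rejected since $8+3>9$; at slot~$2$ all three remaining tasks fit ($\ell(2)=9$). The resulting profile $(8,9,0,\dots,0)$ is not sorted, and its total area $17$ is below $TC/3=30$, so this bin is neither sorted nor $\tfrac13$-full. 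Hence the output of Pass~1 need not be $3$-structured, and Lemma~\ref{lem:smalltasks} cannot be invoked for Pass~2 as you propose.

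The paper avoids the composition issue altogether by running a \emph{single} First-Fit pass over all small tasks and performing the case split only inside the contradiction argument: it picks one witness task $i$ from the overflow bin and observes that, since $i$ is small but not fat, either $h_i,w_i\le\tfrac13$, or $h_i\le\tfrac12$ and $w_i\le\tfrac13$, or $h_i\le\tfrac13$ and $w_i\le\tfrac12$; in every case $(1-h_i)(1-w_i)\ge\tfrac13$, so the area bound from Lemma~\ref{lem:smalltasks} applies directly with that one witness. Your two-pass route can probably be repaired---every bin that Pass~1 completes and moves past is in fact $\tfrac13$-full by the area argument, and for the last bin Pass~1 touches one can exploit that any slot where sortedness breaks already has load above $2C/3>C/2$---but that is extra work your proposal does not supply, and the one-pass argument sidesteps it entirely.
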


 \begin{proof} Let $\ell \le 3OPT$ denote the number of $\frac{1}{3}$-full bins from the initial packing. We will use again the First-Fit inspired algorithm to pack the remaining tasks into the bins with non-increasing load profiles (possibly opening new bins in the process). Suppose by contradiction that we used more than $3OPT$ bins in total. This implies that the tasks assigned to the last bin are all small. If one of the tasks has height and width at most $\frac{1}{3}$ then as in Lemma~\ref{lem:smalltasks}, each one of these bins (except for the new opened one) must have total load at least $\left(1-\frac{1}{3}\right)\left(1-\frac{1}{3}\right) \ge \frac{1}{3}$ reaching a contradiction analogously to the proof of Lemma~\ref{lem:smalltasks}. If on the other hand the tasks in the last bin have height (resp. width) larger than $\frac{1}{3}$, then their width (resp. height) must be at most $\frac{1}{3}$, which then again as in Lemma~\ref{lem:smalltasks} that each of the bins that were initially not $\frac{1}{3}$-full have now total load at least $\left(1-\frac{1}{2}\right)\left(1-\frac{1}{3}\right) \ge \frac{1}{3}$, reaching again a contradiction. \end{proof}
 
From now on, we will focus on constructing $3$-structured solutions for non-small tasks, considering two cases depending on the value $OPT$.
 
\subsubsection{The case $OPT \le 70$}

Let us start with the case $OPT\le 70$. As described in the initial $4$-approximation, tall and wide tasks induce Bin Packing instances. This time, instead of directly applying the $\frac{3}{2}$-approximation to solve them, we will first study the interplay between fat, tall and wide tasks in any feasible allocation, as intuitively a bin should not have too many fat tasks while having large total area of tall and wide tasks. 

\begin{lemma}\label{lem:3fat} Consider a feasible allocation of tasks into a bin.

\begin{enumerate}
    \item If it contains three fat tasks, and tall tasks of total width larger than $\frac{8}{9}$, then the total height of wide tasks is smaller than $\frac{4}{9}$. A symmetric statement holds interchanging the roles of tall and wide tasks.
    \item If it contains four fat tasks, then the total width of tall tasks and the total height of wide tasks is at most $\frac{2}{3}$.
\end{enumerate} \end{lemma}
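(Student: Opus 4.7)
I would handle the two parts separately because they call for different techniques.

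For the first claim, the plan is a straightforward area accounting: in any feasible allocation, the total area of tasks assigned to one bin is at most $TC$ (which I normalize to $1$). The three fat tasks contribute area strictly greater than $3\cdot(1/3)^2 = 1/3$. Every tall task with width $w$ has height $>1/2$ and thus area $>w/2$, so the tall tasks contribute area $>\alpha/2 > 4/9$, where $\alpha$ denotes the total width of the tall tasks and $\alpha > 8/9$ by hypothesis. Analogously, the wide tasks with total height $\beta$ contribute area $>\beta/2$. Summing these three lower bounds and comparing against the bin capacity yields $\tfrac{1}{3} + \tfrac{4}{9} + \tfrac{\beta}{2} < 1$, i.e.\ $\beta < \tfrac{4}{9}$. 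The symmetric version (tall and wide roles swapped) follows by the identical argument.

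For the second claim, a pure area estimate is too loose (it would only give $\alpha + \beta < 10/9$), so I switch to a width-profile argument. First, since every fat task has height $>1/3$, three simultaneously present fat tasks would exceed the bin capacity, so at every time slot at most two fat tasks are present. Second, a tall task (height $>1/2$) together with two fat tasks (combined height $>2/3$) would already exceed capacity $1$, hence at every slot containing a tall task there is at most one fat task. Integrating the number of fat tasks present over time, the left-hand side equals the sum of the fat-task widths, which is $>4/3$; the right-hand side is bounded by $\alpha \cdot 1 + (1-\alpha)\cdot 2 = 2 - \alpha$ by splitting time into the tall slots (total measure $\alpha$) and their complement. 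Combining, $2 - \alpha > 4/3$, so $\alpha < 2/3$. The bound on the total height of wide tasks follows by interchanging the roles of tall and wide.

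I do not expect any step to be particularly delicate. The only conceptual point I anticipate is recognising that for the second part the area accounting used in the first part is too weak, and that the correct surrogate is the one-dimensional time-projection constraint arising from the ``at most two overlapping fat tasks'' bottleneck; once that is identified, the bound is immediate.
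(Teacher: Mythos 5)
Your Part 1 is fine and coincides with the paper's proof: it is the same area accounting (fat area at least $\frac{1}{3}$, tall area larger than $\frac{4}{9}$, hence wide area below $\frac{2}{9}$ and wide height below $\frac{4}{9}$), and here the symmetric statement really does follow by symmetry because only areas, the total width of tall tasks and the total height of wide tasks enter. Your bound on the total width of tall tasks in Part 2 is also correct and is essentially the paper's argument in integrated form: the paper observes that the four fat tasks must pairwise overlap on slots of total measure at least $\frac{1}{3}$, where the fat load exceeds $\frac{2}{3}$ and hence no tall task can sit, while your version counts fat multiplicity over time; both give $w(\mathcal{T})\le\frac{2}{3}$.

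The genuine gap is the last sentence of your Part 2: the bound on the total height of wide tasks does \emph{not} ``follow by interchanging the roles of tall and wide.'' The demand model is not symmetric in height and width: tasks have positions only along the time axis, their heights merely add up subject to the capacity constraint at each slot, and there is no vertical placement, so there is no constraint of the form ``total width of tasks present at a given height level is at most $T$.'' Concretely, the transposed statements your integration would need are false or meaningless: it is \emph{not} true that every slot containing a wide task carries at most one fat task (a wide task may have height arbitrarily close to $0$ and coexist with two fat tasks), and ``integrating the number of fat tasks over height levels'' is undefined since tasks have no vertical coordinates. The paper therefore gives a separate argument for the wide tasks: if some fat task occupies the middle slot, then all wide tasks (each of width larger than $\frac{1}{2}$, hence crossing the middle) have total height at most $1-\frac{1}{3}=\frac{2}{3}$; otherwise no fat task crosses the middle, so exactly two fat tasks lie in each half of the bin, which forces fat load larger than $\frac{2}{3}$ at the slots at positions $\frac{1}{3}$ and $\frac{2}{3}$, and since every wide task must occupy one of these two slots (the gap between them has width only $\frac{1}{3}$), the total height of wide tasks is bounded by the free capacity at these two slots, i.e., by $\frac{2}{3}$. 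Some argument of this kind (a case distinction on where the fat tasks sit relative to the middle) is needed to close your proof.
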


\begin{proof} \begin{enumerate}
\item We will prove the first statement of the lemma as the second one is completely symmetric. The total area of fat tasks in the bin is at least $3\cdot\frac{1}{9} = \frac{1}{3}$ and the total area of tall tasks in the bin is larger than $\frac{1}{2}-\frac{1}{18}$, implying that the total area of wide tasks in the bin is smaller than $\frac{2}{9}$. Hence, their total height must be smaller than $\frac{4}{9}$.

\item Consider first the tall tasks in the bin. Since the bin contains four fat tasks, they must overlap and share slots of total width at least $\frac{1}{3}$. Since these slots would have total load of fat tasks at least $\frac{2}{3}$, no tall task can use that slot and then their total width is bounded by $\frac{2}{3}$ as tall tasks cannot share slots.
	
  Regarding wide tasks, we can assume that no fat task intersects the vertical line in the middle of the bin because every wide task intersects this line and then the claim would be already fulfilled. This implies that two of the fat tasks must be in the left half of the bin and the other two in the right half. Because of these, fat tasks on the left side must occupy the slot in position $\frac{1}{3}$ and fat tasks on the right side must occupy the slot in position $\frac{2}{3}$. Notice also that each wide task must occupy the slot in position $\frac{1}{3}$ or the one in position $\frac{2}{3}$ as the space between these two is only $\frac{1}{3}$. This already implies that the total height of wide tasks is at most the free load left by fat tasks in the slot in position $\frac{1}{3}$ plus the free load left by fat tasks in the slot in position $\frac{2}{3}$, which is at most $\frac{2}{3}$. \end{enumerate} \end{proof}
  
  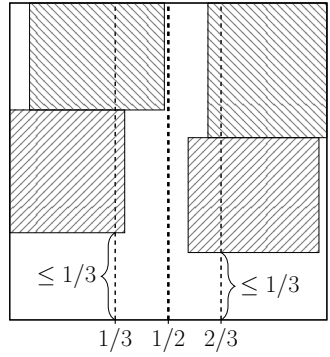
\begin{figure}
	    \centering
    \resizebox{0.26\textwidth}{!}{
		\begin{tikzpicture}
		  \draw[thick] (0,0) rectangle (8,8);
  		  \draw[dashed] (2.66, 0) -- (2.66, 8);
  		   \draw[dashed] (5.33, 0) -- (5.33, 8);
  		  \draw[dashed, ultra thick] (4, 0) -- (4, 8);
  		  
  		  \draw[fill=gray!40, pattern=north east lines, pattern color=gray] (0,2.2) rectangle (2.9,5.3);
  		  \draw[fill=gray!40, pattern=north west lines, pattern color=gray] (0.5,5.3) rectangle (3.9,8);
  		  
  		  \draw[fill=gray!40, pattern=north east lines, pattern color=gray] (4.5,1.7) rectangle (7.8,4.6);
  		  \draw[fill=gray!40, pattern=north west lines, pattern color=gray] (5,4.6) rectangle (8,8);
  		  
  		  \draw [decorate,decoration={brace,amplitude=10pt},xshift=0pt,yshift=0pt]
(2.66,0) -- (2.66,2.2) node [black,midway,xshift=-35pt] 
{\LARGE $\le 1/3$};

\draw [decorate,decoration={brace,amplitude=10pt},xshift=0pt,yshift=0pt]
(5.33,1.7) -- (5.33,0) node [black,midway,xshift=35pt] 
{\LARGE $\le 1/3$};

           \draw[thick] (2.66, 0) -- (2.66,-0.1);
			\draw[thick] (5.33, 0) -- (5.33,-0.1);
           \draw[thick] (4, 0) -- (4,-0.1);
            \draw (2.66, -0.5) node {\LARGE $1/3$};
            \draw (5.33, -0.5) node {\LARGE $2/3$};
		    \draw (4, -0.5) node {\LARGE $1/2$};
		
		\draw[thick] (0,0) rectangle (8,8);
  		  \draw[dashed] (2.66, 0) -- (2.66, 8);
  		   \draw[dashed] (5.33, 0) -- (5.33, 8);
  		  \draw[dashed, ultra thick] (4, 0) -- (4, 8);   
  		  
		\end{tikzpicture}}
		\caption{A feasible allocation with four fat tasks that do not occupy the middle slot as in the proof of Lemma~\ref{lem:3fat}. Their placement does not allow to have wide rectangles of arbitrarily large total height in the bin.}
			\label{fig:3apx4fatwide}
	\end{figure}

  Having such stronger area guarantees for each type of task is helpful, as then we can use the PTAS for \emph{Multiple Knapsack} due to Chekuri and Khanna~\cite{multnap} to obtain efficient solutions in terms of bins. In the latter problem, we are given a set of one-dimensional bins, each one having a given capacity, and a set of items having given sizes and profits, and the goal is to assign a subset of the items of maximum profit to the bins so that the total size of the items in a bin is at most its capacity.
  
\begin{lemma}\label{lem:MKPpacking}
Consider a set of items, where each item $i$ is characterized by its size $s_i$, and a set of $K$ bins, $K\in \mathbb{N}$ constant, where each bin $j$ is characterized by its capacity $c_j\le 1$. If there exists a feasible packing of all the items into the bins and $\sum_i{s_i} \le \left(\sum_j{c_j}\right)-B$ for some constant $B>0$, then it is possible to pack all the items into the $K$ bins in polynomial time. \end{lemma}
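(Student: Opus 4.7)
My plan is to reduce the problem to a constant-size enumeration combined with a greedy placement of tiny items, essentially implementing the guess-and-verify skeleton behind the Chekuri--Khanna PTAS for our setting where all items must be packed (as opposed to a profit-maximizing subset). The intuition is that the slack $B$ lets us afford to classify items as ``large'' (size exceeding a constant threshold depending on $B$ and $K$) or ``small'' and handle each class differently.

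Concretely, I would set $\varepsilon := B/(2K)$ and call an item \emph{large} if $s_i > \varepsilon$, \emph{small} otherwise. Since the total size of all items is at most $\sum_j c_j \le K$, the number of large items is at most $K/\varepsilon = 2K^2/B$, which is a constant because $K$ and $B$ are constants. Therefore, the number of possible assignments of large items to the $K$ bins is bounded by $K^{2K^2/B}$, again a constant. The algorithm enumerates all such assignments, discards those that violate some capacity $c_j$, and for each surviving guess it appends the small items one by one using \emph{First-Fit}: each small item is placed into the first bin in which it still fits, and the guess is declared a success if every small item is accommodated. The algorithm outputs the first successful guess.

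For correctness, observe that the assumed feasible packing induces an assignment $L_1,\dots,L_K$ of the large items to bins that respects the capacities, so this assignment is among those enumerated. I claim First-Fit succeeds on the small items for this guess. Suppose for contradiction that some small item $i^*$ cannot be placed; then every bin $j$ has current load $\ell_j > c_j - s_{i^*} \ge c_j - \varepsilon$. Summing and adding $s_{i^*}$, the total size of the items processed so far exceeds $\sum_j c_j - K\varepsilon = \sum_j c_j - B/2$. Since processed sizes are at most $\sum_i s_i \le \sum_j c_j - B$, we obtain $\sum_j c_j - B > \sum_j c_j - B/2$, a contradiction. Hence the guess that mirrors the witness packing works, and the algorithm terminates successfully in polynomial time.

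The main conceptual obstacle is precisely the step where First-Fit ``almost fills'' the bins: one must ensure that the cumulative slack $B$ dominates the per-bin First-Fit loss of at most $\varepsilon$ per bin, which is why the threshold $\varepsilon = B/(2K)$ is chosen as a function of both $B$ and the number of bins $K$. The assumption that $K$ is constant is essential, both to make the enumeration polynomial and to keep the total First-Fit slack $K\varepsilon$ strictly below $B$; without it, the threshold $\varepsilon$ would shrink and the large-item enumeration would explode. Alternatively, one could phrase the same argument by invoking the Chekuri--Khanna PTAS with profits equal to sizes and accuracy $\varepsilon = B/(2K)$, but the enumeration view above makes the slack bookkeeping transparent.
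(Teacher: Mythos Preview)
Your proof is correct, and it takes a genuinely different (and more elementary) route than the paper's. The paper invokes the Chekuri--Khanna Multiple-Knapsack PTAS as a black box: it first guesses a bin $j'$ whose free capacity in the witness packing is at least $B/K$, runs the PTAS with accuracy $\varepsilon = B/K^2$ on the remaining $K-1$ bins with profit equal to size, and then argues that the total size left unpacked by the PTAS is at most $c_{j'}$, so the leftovers can be dumped into bin $j'$. Your argument instead opens the box: you enumerate the constantly many assignments of items larger than $B/(2K)$ and show by a direct First-Fit slack count that the correct guess always absorbs the small items. What you gain is a self-contained proof that does not require citing or trusting the PTAS, and you avoid the extra ``reserve a bin'' trick; what the paper's version buys is brevity and modularity, since it delegates the combinatorics to a known result. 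Both approaches crucially use that $K$ is constant---yours to keep the enumeration finite, the paper's to make the PTAS accuracy $B/K^2$ a constant.
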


\begin{proof} Since $\sum_i{s_i} \le \left(\sum_j{c_j}\right)-B$, there must be a bin $j'$ such that its free capacity is at least $\frac{B}{K}$ (we can assume $j'$ is known by guessing). Consider the Multiple Knapsack instance induced by the remaining $K-1$ bins and the items in the instance, having size and profit equal to $s_i$. Since there exists a feasible solution for this instance of total profit at least $\sum_i{s_i} - \left(c_{j'} - \frac{B}{K}\right)$, we can apply the $(1+\varepsilon)$-approximation for Multiple Knapsack with parameter $\varepsilon=\frac{B}{K^2}$ to obtain a packing of a subset of the items into these $K-1$ bins of total size at least $(1-\varepsilon)\left(\sum_i{s_i}-c_{j'}-\frac{B}{K}\right)$. We conclude by noticing that the total area of the items that were not packed is at most \begin{eqnarray*} \sum_i{s_i} - (1-\varepsilon)\left(\sum_i{s_i}-c_{j'}-\frac{B}{K}\right) & = & c_{j'} - \frac{B}{K} + \varepsilon \sum_i{s_i} - \varepsilon c_{j'} + \varepsilon \frac{B}{K} \\ & \le & c_{j'} - \frac{B}{K} + \varepsilon K \le c_{j'}, \end{eqnarray*} meaning that we can pack them into bin $j'$. \end{proof}

Lemma~\ref{lem:MKPpacking} implies, in particular, that (one-dimensional) Bin Packing instances induced by tall (resp. wide) tasks that fit into $K$ bins, can indeed be packed into $K$ bins in polynomial time if their total width (resp. height) is smaller than $K$ by a constant amount. We will use this result to prove that, when $OPT\le 70$, it is possible to compute a feasible well-structured solution of the tall, wide and fat tasks.

\begin{lemma}\label{lem:opt<70} For instances of 2D Demand Bin Packing satisfying that $OPT\le 70$, there exists a polynomial-time algorithm that computes a $3$-structured solution for tall, wide, and fat tasks that uses at most $3OPT$ bins. \end{lemma}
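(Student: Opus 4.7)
The plan is to exploit the constant bound $OPT \le 70$ so that the total number of fat tasks (at most $5 \cdot OPT$, by the time-slot argument that at each moment at most two fat tasks can coexist in a bin since their heights exceed $C/3$) is also a constant, and then apply Lemma~\ref{lem:MKPpacking} (Multiple Knapsack rounding) after enumerating a fat-task configuration. Note that tall tasks cannot share time slots and so induce a 1D Bin Packing instance via their widths with optimum $OPT_T \le OPT$; symmetrically, wide tasks induce a BP instance via heights with $OPT_W \le OPT$.

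First I would enumerate all constantly many assignments of fat tasks into bins together with their geometric placements within each bin; among these enumerations, one matches the fat placement in some optimal solution. For each such bin, Lemma~\ref{lem:3fat} determines the residual tall width and wide height capacities available: in particular, bins with three or four fat tasks have these capacities bounded away from $1$ by a positive constant, which provides the constant slack $B$ required by Lemma~\ref{lem:MKPpacking}.

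Next I would apply Lemma~\ref{lem:MKPpacking} separately to tall tasks (using their widths as item sizes and residual widths as bin capacities) and to wide tasks (using heights as item sizes and residual heights as bin capacities), fitting them into the fat-containing bins plus additional bins obtained via the $\frac{3}{2}$-approximation for Bin Packing~\cite{levi}. A careful counting, using that tall and wide induce BP instances with optima at most $OPT$ each and that fat-containing bins serve as shared contributions to both the tall and wide packings, bounds the total number of bins by $3 \cdot OPT$.

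The $3$-structured property is then verified per bin: bins containing three or more fat tasks are automatically $1/3$-full (fat area exceeds $3/9 = 1/3$); bins containing only tall or only wide tasks admit a non-increasing load profile by sorting left-to-right in decreasing order of width or height respectively; other mixed bins are handled analogously by placing the fat items first and the sorted tall or wide items in the residual slot. The main obstacle I expect is the precise interplay between fat, tall, and wide governed by Lemma~\ref{lem:3fat}, which must be exploited both to guarantee the slack for Multiple Knapsack rounding and to keep the final bin count at $3 \cdot OPT$ in every enumerated configuration; once this slack is established, the remaining arguments reduce to routine bookkeeping permitted by the constant-size regime $OPT \le 70$.
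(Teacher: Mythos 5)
Your plan assembles the right ingredients (the constant-size regime, Lemma~\ref{lem:3fat}, Lemma~\ref{lem:MKPpacking}, the $\frac{3}{2}$-approximation for Bin Packing), but the proof has a genuine gap precisely where the content of the lemma lies: the counting that brings the total down to $3\cdot OPT$ is asserted as ``careful counting'' and ``routine bookkeeping'' rather than carried out, and the mechanism you propose does not obviously deliver it. The paper's proof is a four-case analysis comparing $w(\mathcal{T})$ and $h(\mathcal{W})$ with $OPT-\frac{1}{9}$: when both totals are large, an area argument shows there are \emph{no} fat tasks and two applications of the $\frac{3}{2}$-approximation already give $3OPT$; when a total is at most $OPT-\frac{1}{9}$, it is this hypothesis (not the presence of fat-heavy bins) that supplies the constant slack $B$ required by Lemma~\ref{lem:MKPpacking}; and in the mixed cases fat tasks are co-located only with the ``light'' class (two per capacity-$\frac{1}{2}$ bin, the rest in groups of four via NFDH), which is exactly what makes $\lfloor\frac{3}{2}OPT\rfloor+OPT+\lceil\cdot\rceil\le 3OPT$ close. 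In your scheme the slack for Lemma~\ref{lem:MKPpacking} is claimed to come from bins containing three or four fat tasks, but nothing guarantees such bins exist or are numerous enough: if, say, $w(\mathcal{T})$ is essentially $OPT$ while fat tasks are present, you must fall back to the $\frac{3}{2}$-approximation for tall tasks, and then your accounting (roughly $\frac{3}{2}OPT$ for tall, up to $\frac{3}{2}OPT$ or $OPT$ for wide, plus bins for fat) is not shown to stay within $3OPT$; the paper's case distinction exists precisely to exclude the bad combinations (both classes heavy forces the absence of fat tasks).

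Two further problems. Treating a bin's leftover as independent ``residual tall width'' and ``residual wide height'' capacities is not sound: a tall task (height $>\frac{1}{2}$) and a wide task can share a slot only when their loads happen to be compatible, so packing the two classes separately against per-bin residual capacities does not by itself yield a feasible allocation, nor an obviously sorted or $\frac{1}{3}$-full bin; the paper never mixes tall and wide tasks in one bin for this reason, and only ever guesses the two integers $OPT_1,OPT_2$. Also, enumerating the fat tasks' ``geometric placements within each bin'' is not a polynomial-time step as stated, since starting positions range over $\{1,\dots,T\}$ and $T$ is encoded in binary; you would at least need to argue that only constantly many canonical placements must be tried. Until the case analysis (or an equivalent argument producing both the slack and the $3OPT$ arithmetic in every configuration) is supplied, the proof is incomplete.
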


\begin{proof} Consider an instance of 2D Demand Bin Packing with optimal number of bins $OPT\le 70$. Let $\mathcal{T}$ be the set of tall tasks and $\mathcal{W}$ the set of wide tasks in the instance. We will consider four cases depending on the total width of tall tasks $w(\mathcal{T})$ and total height of wide tasks $h(\mathcal{W})$ in the instance.

\begin{itemize}
    \item If $h(\mathcal{W})>OPT-\frac{1}{9}$ and $w(\mathcal{T})>OPT-\frac{1}{9}$, then the total area of these tasks is larger than $OPT-\frac{1}{9}$ and in particular there are no fat tasks in the instance. Hence we can simply use the $\frac{3}{2}$-approximation for Bin Packing in the instances these tasks induce separately and obtain a solution for them using at most $3OPT$ bins satisfying the required properties (since we can sort the tasks in each bin according to their width or height, a sorted load profile can be easily attained).
    \item If $h(\mathcal{W})\le OPT-\frac{1}{9}$ and $w(\mathcal{T})\le OPT-\frac{1}{9}$ then we can use Lemma~\ref{lem:MKPpacking} to obtain a solution for tall tasks using at most $OPT$ bins, and a solution for wide tasks using at most $OPT$ bins. Notice that fat tasks can be allocated using NFDH into at most $OPT$ bins as any set of four fat tasks fit into a bin. The obtained solution satisfies the required properties.
    \item If $h(\mathcal{W})>OPT-\frac{1}{9}$ and $w(\mathcal{T})\le OPT-\frac{1}{9}$, then in each bin the total height of wide tasks is larger than $1-\frac{1}{9}$. Thanks to Lemma~\ref{lem:3fat}, we know then that each bin has at most two fat tasks or the total width of tall tasks in the bin is at most $\frac{1}{2}-\frac{1}{18}$. Let $OPT_1$ be the number of bins with at most two fat tasks and $OPT_2$ the number of remaining bins, and notice that each bin in the second group contains at most three fat tasks thanks to Lemma~\ref{lem:3fat}. We will use the $\frac{3}{2}$-approximation for Bin Packing to allocate tall tasks, but for wide tasks we will use instead Lemma~\ref{lem:MKPpacking} with $OPT_1$ bins of capacity $1$ and $OPT_2$ bins of capacity $\frac{1}{2}$ (we guess these two values). Notice that the hypotheses of the lemma are fulfilled as if $OPT_2>1$ then the total height of the wide tasks differs from the total capacity of the bins by at least $\frac{1}{18}$, and if $OPT_2=0$ this difference is at least $\frac{1}{9}$. We will allocate the fat tasks as follows: In each of the $OPT_2$ bins that contain wide tasks of total height at most $\frac{1}{2}$ we will allocate any two fat tasks, and the remaining ones (if any) will be placed into new bins in groups of four using NFDH. This way we obtain a feasible solution with only non-increasing profiles (bins containing wide and fat tasks can be rearranged in that way as tasks are placed in shelves) whose number of bins is at most \[\left\lfloor \frac{3}{2}OPT\right\rfloor + OPT + \left\lceil \frac{2OPT_1 + OPT_2}{4} \right\rceil \le \left\lfloor \frac{3}{2}OPT\right\rfloor + OPT + \left\lceil \frac{OPT}{2} \right\rceil = 3OPT.\]
    \item If $h(\mathcal{W})\le OPT-\frac{1}{9}$ and $w(\mathcal{T})\le OPT-\frac{1}{9}$ we can do a symmetric procedure as in the previous case to obtain $OPT_1$ bins of width $1$ and $OPT_2$ bins of width $\frac{1}{2}$ for tall tasks, but we need a slight modification in the assignment of fat tasks to ensure that the obtained load profiles are non-increasing. We will place the $OPT_2$ bins of width $\frac{1}{2}$ in groups of $2$ into $\lceil OPT_2/2\rceil$ bins, and if $OPT_2$ is odd we will place one fat task in the bin that tall tasks of total width at most $\frac{1}{2}$ in it. The remaining fat tasks are allocated using NFDH into new bins. This way we obtain a solution with non-increasing load profiles simply by sorting the tasks non-increasingly by height in bins containing tall tasks. The number of bins in this solution is at most \[\left\lfloor \frac{3}{2}OPT\right\rfloor + OPT + \left\lceil \frac{2OPT_1 + OPT_2}{4} \right\rceil \le \left\lfloor \frac{3}{2}OPT\right\rfloor + OPT + \left\lceil \frac{OPT}{2} \right\rceil = 3OPT\] if $OPT_2$ is even, and at most \[\left\lfloor \frac{3}{2}OPT\right\rfloor + OPT + \left\lceil \frac{2OPT_1 + OPT_2+1}{4} \right\rceil \le \left\lfloor \frac{3}{2}OPT\right\rfloor + OPT + \left\lceil \frac{OPT}{2} \right\rceil = 3OPT\] if $OPT_2$ is odd.
\end{itemize}\end{proof}

\subsubsection{The case $OPT > 70$}
   For large values of $OPT$, we can instead use good asymptotic approximation algorithms for (one-dimensional) Bin Packing for the induced subroutines, such as the well known APTAS that uses at most $(1 + \varepsilon)OPT + 1$ bins~\cite{VL81}. If $\varepsilon = \frac{1}{70}$, then in this case $(1 + \varepsilon)OPT + 1 \le (1+2\varepsilon)OPT$.
   
\begin{lemma}\label{lem:opt>70} If $OPT> 70$, there is a polynomial-time algorithm that computes a $3$-structured solution for tall, wide, and fat tasks that uses at most $3OPT$ bins. \end{lemma}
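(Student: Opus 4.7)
The plan is to mirror the four-case analysis of Lemma~\ref{lem:opt<70}, substituting the APTAS of de la Vega and Lueker~\cite{VL81} with $\varepsilon = \tfrac{1}{70}$ in every place where the $\tfrac{3}{2}$-approximation for one-dimensional Bin Packing was used. The hint inequality $(1+\varepsilon)OPT + 1 \le (1+2\varepsilon)OPT$, valid for $OPT > 70$, guarantees that APTAS applied to the Bin Packing instances induced by the widths of tall tasks or by the heights of wide tasks — both of whose BP optima are at most $OPT$, by projecting the optimal 2D Demand solution onto a single axis — yields at most $(1+2\varepsilon)OPT$ bins.

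In Case~1 (both $h(\mathcal{W})$ and $w(\mathcal{T})$ exceed $OPT - \tfrac{1}{9}$), the area bound from Lemma~\ref{lem:opt<70} still precludes the existence of fat tasks, so two APTAS runs on tall and wide produce at most $2(1+2\varepsilon)OPT < 3OPT$ bins, with sorted load profiles obtained by shelf ordering. In Cases~3 and~4 (exactly one of the two sums is small), I would apply APTAS on the large-sum type in place of the $\tfrac{3}{2}$-approximation, while keeping unchanged the partition of the small-sum type into $OPT_1$ full bins and $OPT_2$ half-bins, together with the mixing of pairs of fat tasks into half-bins exactly as in Lemma~\ref{lem:opt<70}. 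Since the $\lfloor \tfrac{3}{2}OPT \rfloor$ term shrinks to $(1+\varepsilon)OPT + 1$, the total count becomes at most $(1+\varepsilon)OPT + 1 + OPT + \lceil OPT/2 \rceil \le (\tfrac{5}{2}+\varepsilon)OPT + 2$, which is below $3OPT$ whenever $OPT > 70$.

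The main obstacle is Case~2 ($h(\mathcal{W}), w(\mathcal{T}) \le OPT - \tfrac{1}{9}$), where Lemma~\ref{lem:opt<70} crucially relies on Lemma~\ref{lem:MKPpacking} to place tall and wide into exactly $OPT$ bins each; the underlying Multiple Knapsack PTAS only runs in polynomial time for a constant number of bins, and hence fails when $OPT > 70$. My intended remedy is to sub-divide on the number $|F|$ of fat tasks. When $|F|$ is at most roughly $(4-16\varepsilon)OPT$, NFDH places them into about $(1-4\varepsilon)OPT$ bins (since any four fat tasks fit in a bin), which exactly absorbs the $4\varepsilon OPT$ overhead incurred by the two APTAS runs on tall and wide. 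When $|F|$ exceeds this threshold, a pigeonhole argument yields $\Omega(\varepsilon) OPT$ optimal bins that contain exactly four fat tasks, and Lemma~\ref{lem:3fat}(2) then forces the total width of tall tasks and the total height of wide tasks in those bins to be at most $\tfrac{2}{3}$. This restriction must be leveraged into a strictly better Bin Packing optimum for tall and wide — for instance by repacking the tall and wide contributions from the $4$-fat bins into fewer 1D bins — so that the APTAS saves the slack needed to stay below $3OPT$. Quantitatively matching the two sub-cases at the threshold $|F| \approx (4-16\varepsilon)OPT$, and simultaneously certifying the sorted load profiles, is the technically delicate step of the proof.
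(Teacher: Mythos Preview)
Your instinct to split on the number of fat tasks is exactly right, and your first sub-case (when $|F|\le(4-16\varepsilon)OPT$: NFDH on fat into $(1-4\varepsilon)OPT$ bins, APTAS on tall and on wide into $(1+2\varepsilon)OPT$ bins each) is precisely the paper's first case. However, the paper's proof is considerably simpler than what you outline, because it \emph{drops the four-case wrapper from Lemma~\ref{lem:opt<70} entirely}: there is no case analysis on $h(\mathcal W)$ or $w(\mathcal T)$, and neither Lemma~\ref{lem:MKPpacking} nor Lemma~\ref{lem:3fat} is invoked anywhere.

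For the many-fat-tasks sub-case you leave open, the paper uses a direct \emph{area} argument instead of Lemma~\ref{lem:3fat}(2). If $|F|>4(1-4\varepsilon)OPT$, then since each fat task has area at least $\tfrac{1}{9}$, the combined area of tall and wide tasks is at most $\bigl(\tfrac{5}{9}+\tfrac{16}{9}\varepsilon\bigr)OPT$. The paper then (i) puts all fat tasks into at most $OPT$ bins by NFDH, (ii) isolates the at most $16\varepsilon\,OPT$ tall tasks with width $>\tfrac{1}{3}$ and wide tasks with height $>\tfrac{1}{3}$ into their own bins (one per bin), and (iii) packs the remaining tall (resp.\ wide) tasks greedily side by side (resp.\ stacked); since each such task has the short dimension at most $\tfrac{1}{3}$ and the long dimension at least $\tfrac{1}{2}$, every bin except possibly the last is $\tfrac{1}{3}$-full, so step~(iii) uses at most $3\bigl(\tfrac{5}{9}+\tfrac{16}{9}\varepsilon\bigr)OPT+2$ bins. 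Summing gives at most $3OPT$. No APTAS, no Multiple Knapsack, and no repacking of ``$4$-fat bins'' is needed; the area bound does all the work. Your route via Lemma~\ref{lem:3fat}(2) might be completable, but it is more intricate than necessary and, as you note yourself, the quantitative matching at the threshold is not carried out.
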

   
\begin{proof} We will divide the proof into two cases depending on the number of fat tasks in the instance. If is is at most $4(1-4\varepsilon)OPT$, then we can place them into at most $(1-4\varepsilon)OPT$ bins as any four such tasks fit together (and we can use NFDH for each such group to attain sorted load profiles). Using the APTAS for Bin Packing~\cite{VL81}, we can place the tall tasks and the wide tasks into at most $(1 + 2\varepsilon)OPT$ bins each respectively, obtaining at most $3OPT$ bins as required.
   
If the number of fat tasks is larger than $4(1-4\varepsilon)OPT$, which we can place into $OPT$ bins as before, the total area of wide and tall tasks is at most $\left(\frac{5}{9} + \frac{16}{9} \varepsilon\right) OPT$. We can place the wide tasks having height larger than $\frac{1}{3}$ and the tall tasks having width larger than $\frac{1}{3}$ into $16\varepsilon OPT$ separate empty bins as there cannot be more than $4OPT$ tasks having height and width larger than $\frac{1}{3}$. The remaining tall tasks will be allocated greedily one next to the other, and wide tasks as well placing them one on top of the other. Since the width (resp. height) of these tall (resp. wide) tasks is at most $\frac{1}{3}$ and their height (resp. width) is at least $\frac{1}{2}$, each bin except possibly for the last one will have total area of tasks at least $\frac{1}{3}$. Hence, we would have used at most $3\left(\frac{5}{9}OPT + \frac{16}{9} \varepsilon OPT\right) + 2$ bins for them. The total number of bins is at most $OPT + 16\varepsilon OPT + 3\left(\frac{5}{9}OPT + \frac{16}{9} \varepsilon OPT\right) + 2 \le 3OPT$, and the structure of the load profiles can be ensured simply by sorting the tasks in each bin. \end{proof}

Putting Lemmas~\ref{lem:smalltasksm}, \ref{lem:opt<70}, and \ref{lem:opt>70} together, we conclude the claimed result.
\begin{theorem}\label{theo:3apx} There is a 3-approximation algorithm for 2D Demand Bin Packing problem. \end{theorem}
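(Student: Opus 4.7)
The plan is to assemble the theorem from the three preceding lemmas, following the framework of Section~\ref{sec:prelim}. First I would classify each task as tall, wide, fat, or small according to the thresholds introduced before the statement (recall: tall if $h_i > \frac{1}{2}$; wide if $h_i \le \frac{1}{2}$ and $w_i > \frac{1}{2}$; fat if $h_i, w_i \in (\frac{1}{3},\frac{1}{2}]$; small otherwise). Since $OPT \le n$, the algorithm can try every candidate value of $OPT$ and return the best successful outcome, so I would assume $OPT$ is known.

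For the bulk of the packing, I would handle the non-small tasks with a case split on $OPT$. If $OPT \le 70$ I would invoke Lemma~\ref{lem:opt<70}; otherwise, I would invoke Lemma~\ref{lem:opt>70}. In either regime the output is a $3$-structured allocation of the tall, wide, and fat tasks using at most $3 \cdot OPT$ bins, so that every bin is either $\frac{1}{3}$-full or carries a non-increasing load profile. The split on $OPT$ is forced by the available subroutines: the small-$OPT$ regime relies on the Multiple Knapsack PTAS (Lemma~\ref{lem:MKPpacking}), which only affords a constant additive slack, whereas the large-$OPT$ regime can absorb the $(1+\varepsilon)$ multiplicative loss of the Bin Packing APTAS.

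Finally, I would apply Lemma~\ref{lem:smalltasksm} to insert all small tasks on top of this structured packing without opening any new bin, via the First-Fit inspired procedure; the hypothesis ``$3$-structured on tall, wide, and fat tasks'' is exactly what the previous step delivers. I expect no real obstacle at this concluding stage: all the genuine difficulty has already been paid for inside Lemma~\ref{lem:opt<70}, where Lemma~\ref{lem:3fat} must be combined with Lemma~\ref{lem:MKPpacking} across four different regimes of $h(\mathcal{W})$ and $w(\mathcal{T})$, and inside Lemma~\ref{lem:smalltasksm}, where the boundary tasks with coordinate exactly $\frac{1}{2}$ require a small refinement of the generic Lemma~\ref{lem:smalltasks}. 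Once those are in hand, the theorem follows in a few lines by chaining the three lemmas.
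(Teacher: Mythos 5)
Your proposal matches the paper's proof exactly: the theorem is obtained by combining Lemma~\ref{lem:opt<70} (for $OPT\le 70$) or Lemma~\ref{lem:opt>70} (for $OPT>70$) to build the $3$-structured solution for tall, wide, and fat tasks, and then Lemma~\ref{lem:smalltasksm} to add the small tasks, with $OPT$ guessed since $OPT\le n$. No gaps; this is precisely the paper's argument.
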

\vspace{-5pt}
\section{Concluding remarks}

We presented hardness results and approximation algorithms for a two-dimensional version of Bin Packing denoted as 2D Demand Bin Packing. The main open question is to devise a tight $2$-approximation algorithm for the problem, which is known to exist for 2D Geometric Bin Packing~\cite{2d}. As it was the case for the latter work, the core of the problem most likely lies in the case when $OPT=1$; our current approach considers more or less independently the three sets of tall, wide and fat tasks, which would not be enough to achieve such a result. Further understanding that special case would be a big step towards a tight approximation algorithm for our problem. Another interesting open question is the case of Asymptotic approximation algorithms, where hopefully recent results for the geometric case can be simplified or even improved.

\bibliographystyle{plainurl}
\bibliography{bibliography}

\clearpage
\appendix

\section{Hardness of Approximation for variants of 2D Demand Bin Packing}\label{app:hardness}

In this section, we will prove that deciding whether an instance of 2D Demand Bin Packing can be allocated into one bin is strongly NP-hard in two restricted cases: when the height of the tasks is small compared to the capacity $C$, and when the tasks and the bin are squares (i.e., $T=C$ and, for each task $i$, $h_i=w_i$). This implies a hardness of approximation of $(2-\varepsilon)$ for the respective variants of 2D Demand Bin Packing, implying as well that the presented results are best possible (see Theorems~\ref{lem:shortapx}, \ref{thm:2apx-square-square} and \ref{thm:2apx-square-gen}). 

The decision problem, denoted as \emph{Demand Allocation problem}, is defined as follows: Given a timeline of length $T$ with capacity $C$ on each time slot, and $n$ tasks, where each task $i$ has height $h_i$, is there a feasible allocation of the tasks into the timeline?.

In both cases, we will provide a reduction from the strongly NP-hard $3$-Partition problem. In this problem, we receive as instance a set of $3n$ numbers $\{a_1,\dots,a_{3n}\}$, satisfying that $\frac{\sum_j{a_j}}{4} < a_i < \frac{\sum_j{a_j}}{2}$ for every $j=1,\dots,3n$, and the goal is to decide whether there exists a partition of the set into $n$ sets of three numbers each, such that the total sum of the numbers in each set is the same. 

\subsection{Hardness of Approximation for 2D Demand Bin Packing of Short Tasks}

We start by considering the case of short tasks.

\begin{lemma}\label{lem:hardness_short}
 The Demand Allocation problem is strongly NP-hard even when restricted to tasks of height at most $3/n$ unless P$=$NP, where $n$ is the number of tasks in the instance.
\end{lemma}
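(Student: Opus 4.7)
The plan is to reduce from $3$-Partition, which is strongly NP-hard. Given an instance $\{a_1,\dots,a_{3m}\}$ with $\sum_j a_j = mS$ and $S/4 < a_j < S/2$, I would construct a Demand Allocation instance with $n := 3m$ tasks, where task $j$ has width $w_j = a_j$ and height $h_j = 1$, placed into a bin of width $T = S$ and capacity $C = m$. Since $h_j/C = 1/m = 3/n$, the construction satisfies the short-task bound in the lemma, and it is clearly polynomial in the input. It therefore suffices to show that the resulting Demand Allocation instance is feasible if and only if the $3$-Partition instance is a yes-instance.

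The forward direction ($3$-Partition $\Rightarrow$ feasibility) is straightforward: given a valid partition $P_1,\dots,P_m$ into triples summing to $S$, I would schedule the three tasks of each $P_i$ consecutively in $[0,T]$ starting at time $0$, one after the other in some order. Then at every time $t \in [0,S]$ exactly one task from each $P_i$ is active, so the total load equals $m = C$ and the allocation is feasible.

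For the converse, the key observation is that, because all heights equal $1$ and $C = m$, the capacity constraint is equivalent to requiring that the intervals $[s_j, s_j+a_j)$ induced by the allocation form an interval graph with clique number at most $m$. Interval graphs are perfect (equivalently, the left-to-right greedy coloring uses only as many colors as the maximum clique), so the tasks split into $m$ classes of pairwise disjoint intervals inside $[0,S]$. The total width of each class is then at most $S$, and since these widths sum to $mS$, equality must hold class by class. The hypothesis $S/4 < a_j < S/2$ finally forces each class to contain exactly three items, since two items sum to less than $S$ and four items exceed $S$; this is precisely a valid $3$-partition. I expect the only non-routine step in the argument to be this reduction of the demand capacity constraint to interval-graph coloring (equivalently, scheduling on $m$ identical machines of makespan $S$); the remaining work is simple width accounting based on the $3$-Partition bounds.
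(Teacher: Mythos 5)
Your reduction is correct, and the construction is exactly the one the paper uses: one task of height $1$ and width $a_j$ per number, capacity $C=m$ and timeline $T=S$, so that the height-to-capacity ratio is $3/n$; the forward direction (tiling $[0,T]$ with each triple) is also identical. Where you diverge is the converse. The paper first uses the area bound to conclude that a feasible allocation must saturate the bin at \emph{every} time slot, and then argues combinatorially: $m$ tasks start at the leftmost slot, and following from each of them the task that must begin where the previous one ends yields a chain of exactly three tasks of total width exactly $T$ (a fourth would overflow, and the size bounds prevent two from sufficing), which is removed and the argument iterated on a full bin of capacity $m-1$. You instead observe that the capacity constraint is exactly a clique-number bound on the interval graph of the placed tasks (via the Helly property of intervals) and invoke perfection of interval graphs, i.e.\ the left-to-right greedy coloring, to split the tasks into at most $m$ rows of pairwise disjoint intervals; the width accounting ($m$ rows of width at most $S$ summing to $mS$) then forces each row to have width exactly $S$, and the bounds $S/4<a_j<S/2$ force exactly three tasks per row. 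Both routes do the same accounting at the end; yours outsources the decomposition into rows to a standard structural fact and thereby avoids the paper's explicit ``the bin is full everywhere, so some task starts where $i_1$ ends'' chain-tracing, while the paper's version is fully self-contained and elementary. One cosmetic point: the greedy coloring a priori gives \emph{at most} $m$ classes; you should note (as your equality argument implicitly does) that fewer than $m$ classes would give total width strictly less than $mS$, so exactly $m$ nonempty classes arise. With that phrasing tightened, your proof is complete.
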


\begin{proof}
Let $\{a_1,...,a_{3z}\}$ be an arbitrary instance of 3-Partition. We construct a Demand Allocation instance with $3z$ tasks, where each task $i$ will have height $1$ and width $a_i$, and the bins will have capacity $C=z$ and width $T = \frac{1}{z}\sum_{i=1}^{3z}{a_i}$. If there exists a $3$-partition, then there exists an allocation of the tasks into the bin as we can pack the sets of the partition into shelves of height $1$. On the other hand, if there exists an allocation of the tasks into the bin, then the bin is full as the total area of the tasks is $z\cdot T$. We can derive a $3$-Partition from the optimal allocation as follows:

Since the bin is full, there must be $z$ tasks starting at the leftmost slot of the bin. Consider one such task $i_1$. Since the bin is full, some task must start where $i_1$ ends. Let us call $i_2$ the task starting there. Notice that $w_{i_1} + w_{i_2} < W$, so analogously some task must start where $i_2$ ends. Let us call it $i_3$. Notice that no task can start after $i_3$ ends as then the total width of the four tasks would be larger than $T$. Hence the total width of $i_1, i_2$ and $i_3$ is exactly $T$ since the bin is full. If we remove these tasks we are left with a full bin of height $z-1$ and width $T$. So we can iterate the argument and obtain a $3$-partition. \end{proof}

%Another interesting special case is the one of \emph{square tasks}, meaning that each task $i$ satisfies $h(i)=w(i)$. By slightly adapting a result from Leung et al.~\cite{leung} for a related geometric version of the problem, it is possible to prove that the same hardness of approximation holds even for this case.

\begin{corollary}
 For any $\varepsilon > 0$, there is no $(2-\varepsilon)$-approximation algorithm for 2D Demand Bin Packing problem for tasks of height $3/n$ unless P$=$NP, where $n$ is the number of tasks in the instance.
\end{corollary}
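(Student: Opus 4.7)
The plan is a standard gap-reduction argument turning Lemma~\ref{lem:hardness_short} into inapproximability. Suppose for contradiction that, for some fixed $\varepsilon > 0$, there exists a polynomial-time $(2-\varepsilon)$-approximation $\mathcal{A}$ for 2D Demand Bin Packing on instances whose tasks have height at most $3/n$. Given an arbitrary instance of 3-Partition, I would first invoke the reduction from Lemma~\ref{lem:hardness_short} to construct the corresponding Demand Allocation instance (each task of height $1$ in a bin of capacity $z = n/3$, so the normalized heights are exactly $3/n$), and then run $\mathcal{A}$ on this instance.

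The key observation is the integrality gap built into the reduction: by Lemma~\ref{lem:hardness_short}, the constructed instance admits an allocation into a single bin precisely when the underlying 3-Partition instance is a YES-instance, so $OPT = 1$ in the YES case and $OPT \geq 2$ in the NO case. Consequently, $\mathcal{A}$ outputs strictly fewer than $(2-\varepsilon) \cdot 1 < 2$ bins, hence exactly one, on YES-instances, and at least $OPT \geq 2$ bins on NO-instances. Inspecting the output of $\mathcal{A}$ therefore yields a polynomial-time decision procedure for 3-Partition, contradicting its strong NP-hardness.

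I do not expect a substantive obstacle here, since the argument is a routine ``OPT $= 1$ vs.\ OPT $\geq 2$'' threshold argument once Lemma~\ref{lem:hardness_short} is in hand. The only piece of bookkeeping is to make sure one genuinely stays inside the short-task regime invoked by the corollary: since the heights produced by the reduction are $3/n$ and $n$ is the number of tasks in the reduced instance (which grows with the 3-Partition input), this is automatic for all but finitely many input sizes, and the remaining constant-sized instances of 3-Partition can be solved directly by brute force in polynomial time without affecting the overall argument.
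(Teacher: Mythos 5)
Your argument is correct and is essentially the paper's own proof: both use the fact that a $(2-\varepsilon)$-approximation must return an optimal solution whenever $OPT=1$, so it would decide the Demand Allocation instances from Lemma~\ref{lem:hardness_short} and hence solve 3-Partition in polynomial time. The extra bookkeeping you note (normalized heights $3/n$ and brute-forcing finitely many small instances) is harmless detail beyond what the paper states.
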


\begin{proof}
Let us assume that we have a $(2-\varepsilon)$-approximation algorithm for 2D Demand Bin Packing for tasks of height at most $3/n$. Given an instance of Demand Allocation with such short tasks, we can decide whether $OPT=1$ by applying this algorithm since a $(2-\varepsilon)$-approximation returns the optimal solution when $OPT=1$. This means that we can solve Demand Allocation restricted to tasks of height at most $3/n$ in polynomial time, a contradiction to Lemma~\ref{lem:hardness_short} unless P=NP. \end{proof}

\subsection{Hardness of Approximation for 2D Demand Bin Packing of Square Tasks into Square Bins}

We continue with the case of square tasks. The decision problem is defined as follows: Given a timeline of length $N$ with capacity $N$ on each time slot, and $n$ square tasks, where each task $i$ has length side $\ell_i$, is there a feasible allocation of the tasks into the timeline?.

\begin{lemma}\label{lem:hardness-leung} The Demand Allocation problem is strongly NP-hard even when restricted to square tasks and square bins, unless P$=$NP. \end{lemma}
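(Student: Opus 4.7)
The plan is to reduce from the strongly NP-hard 3-Partition problem, in the spirit of Leung's classical reduction for 2D Geometric Bin Packing of squares. Given a 3-Partition instance $\{a_1,\dots,a_{3n}\}$ with $\sum a_i=nB$, target $B$ per triple, and $B/4<a_i<B/2$, I would construct $3n$ ``item'' square tasks with sides $\ell_i = M+a_i$ for a polynomially-bounded parameter $M$ (chosen much larger than $nB$), together with a small number of carefully sized ``filler'' squares whose role is to make the total area of the constructed squares equal to $N^2$, where the square bin is set to $T=C=N$. The parameter $N$ is chosen so that the natural ``$n$ columns, each hosting one stacked triple'' allocation is feasible and tight (with $N\approx 3M+B$ scaled appropriately by the number of triples, and the fillers absorbing any residual area).

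For the forward direction, a valid 3-Partition $\{\mathcal{T}_1,\dots,\mathcal{T}_n\}$ would yield a feasible allocation by starting the three item squares of each triple $\mathcal{T}_j$ at a common time $t_j$: on their common overlap their heights sum to exactly $3M+B$, which is chosen to equal $C$, and the $n$ triples span disjoint time intervals summing to $T$. For the reverse direction, the tight-area design forces the load profile $f(t)=\sum_{i:\,t\in[t_i,t_i+\ell_i]} \ell_i$ to equal $C=N$ at every time slot, since $f(t)\le N$ pointwise while $\int_0^N f\,dt = \sum \ell_i^2 + (\text{filler area}) = N^2$, so the inequality is saturated everywhere. Because every item square has side in the narrow band $(M+B/4, M+B/2)$ around $N/3$, the only way for $f(t)$ to equal $N$ at every slot is for exactly three item squares to be simultaneously active, with sides summing to $3M+B$, equivalently with $a_i$ values summing to $B$. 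Tracking the breakpoints of $f$ then groups the items into $n$ triples that form a valid 3-Partition.

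The main obstacle I expect is the reverse direction. Unlike 2D Geometric Bin Packing, demand allocation allows item squares to ``cross'' one another in time as long as the cumulative height at each slot stays within $C$, so tight area alone does not immediately force a geometric-like decomposition (indeed, Figure~\ref{fig:gapinstance}~(Right) exhibits a demand allocation of squares with no geometric counterpart). To rule out non-canonical allocations, I would pick $M$ large enough relative to $nB$ that any three item squares whose sides sum to exactly $N=3M+B$ must come from a single triple in the underlying 3-Partition, and I would size the filler squares so that no mixed filler/item combination can sustain $f(t)=N$ without respecting this triple structure. The inapproximability corollary, ruling out $(2-\varepsilon)$-approximations for 2D Demand Bin Packing of square tasks into square bins, then follows exactly as in the short-task case: any such approximation would have to decide whether $OPT=1$ and thereby solve the Demand Allocation problem.
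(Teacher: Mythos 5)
There is a genuine gap, and it already appears in the forward direction of your construction. If each triple is meant to fill the bin's height, i.e.\ $C=3M+B$ with item sides $\ell_i=M+a_i\approx C/3$, then because the tasks are \emph{squares} each item also has width $\approx T/3$, so a stacked triple occupies a time interval of length about $M+B/2$, and only three such disjoint intervals fit in a square bin with $T=C=3M+B$. Hence $n$ side-by-side triples are impossible for $n\ge 4$, and no feasible allocation of the intended form exists. If instead you ``scale $N$ by the number of triples'' so that all $n$ triples fit along the time axis, then a stacked triple has height $3M+B=N/n\ll C$, the items cover only a vanishing fraction of the bin area, and your saturation argument ($f(t)\equiv N$ because total area equals $N^2$) collapses: everything now hinges on the unspecified ``fillers.'' This is exactly where the entire difficulty of the reduction lives. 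Moreover, the saturation strategy is in tension with itself: to make the total area hit $N^2$ exactly for an arbitrary instance (note $\sum(M+a_i)^2$ depends on $\sum a_i^2$) you essentially need small, flexible filler squares, but such fillers can pad the load profile at any slot and therefore destroy the rigidity you need for the reverse direction; large rigid fillers, on the other hand, cannot fine-tune the area to an exact value.

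The paper does not use area saturation at all. It adapts Leung et al.'s construction: with $D=3B^3+B$ it introduces \emph{structure} squares (one of side $nD$, $x^*$ of side $(x^*+1)D$, $x^*+2$ of side $x^*D$), \emph{enforcer} squares of side $3B^3$, and \emph{partition} squares of side $B^3+a_i$, in a bin of side $N=(n+x^*+1)D$. The reverse direction is a case analysis on where the side-$nD$ structure square sits, showing that in any feasible demand allocation the structure squares can be rearranged into the canonical frame, which confines all enforcer and partition tasks to a corridor of width $D$ and length $nD$; inside that corridor the enforcer pile of height $2nB^3$ forces at most $n$ partition tasks per slot, and a counting argument then forces the partition squares into triples of total side exactly $D$, i.e.\ a $3$-Partition. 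Your proposal acknowledges the crossing/non-geometric issue (the very phenomenon of Figure~\ref{fig:gapinstance}) but defers its resolution to ``choosing $M$ large'' and ``sizing the fillers,'' which is precisely the missing construction and argument; as stated, the reduction neither has a working yes-instance packing nor a sound reverse direction.
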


Our proof is an adaptation of the construction due to Leung et al.~\cite{leung} for the case of geometric square packing; as mentioned before, the hardness does not extend directly as there exist instances that cannot be packed geometrically but that admit a feasible allocation (see Appendix~\ref{app:gap_instance}).

%We will prove this result via a reduction from $3$-Partition, which is known to be strongly NP-hard. In the 3-Partition problem, we receive as instance a set of $3n$ numbers $\{a_1,\dots,a_{3n}\}$, satisfying that $\frac{\sum_j{a_j}}{4} < a_i < \frac{\sum_j{a_j}}{2}$ for every $j=1,\dots,3n$, and the goal is to decide whether there exists a partition of the set into $n$ sets of three numbers each, such that the total sum of the numbers in each set is the same. 

Let $I = \{a_1,\dots,a_{3n}\}$ be an instance of the $3$-Partition problem, and let $B=\frac{1}{n}\sum_i{a_i}$. We will assume that $n$ is a multiple of $3$ and that the equation $x(x+1)-1=n$ admits an integer solution $x^*$; if it is not the case, we can add triplets of dummy numbers to the 3-Partition instance of the form $\frac{B}{2}-2\varepsilon$, $\frac{B}{4}+\varepsilon$, and $\frac{B}{4}+\varepsilon$, where $\varepsilon>0$ is a small enough constant so that $\frac{B}{4}+\varepsilon < a_i$ for any $i=1,\dots,3n$. It is not difficult to see that the original instance admits a $3$-Partition if and only if the new instance admits it, since the dummy numbers of the form $\frac{B}{2}-2\varepsilon$ can only be paired with the dummy numbers of the form $\frac{B}{4}+\varepsilon$ so as to sum up to $B$ by construction. We will construct an instance of the Demand Square Allocation problem as follows, where $D=3B^3+B$:

\begin{itemize}
    \item We define a timeline of length $N = x^*(x^*+2)D = (n+x^*+1)D$, and the capacity of each time slot consequently is also $N$.
    \item We define $2x^*+3$ square tasks, denoted as \emph{Structure tasks}, such that one of them has side length $nD$, $x^*$ of them have side length $(x^*+1)D$, and $x^*+2$ of them have side length $x^*D$.
    \item We define $3n$ square tasks, denoted as \emph{Partition tasks}, such that for each $i=1\dots,3n$, there is a square task of side length $B^3 + a_i$.
    \item We define $\frac{2n}{3}$ square tasks, denoted as \emph{Enforcer tasks}, each one of side length $3B^3$.
\end{itemize}

As proved by Leung et al.~\cite{leung}, if the $3$-Partition instance admits a partition, then there exists a geometric packing (and hence a feasible allocation) for all the tasks into a bin. See Figure~\ref{fig:leung-packing} for a depiction of the solution.

\begin{figure}
	    \centering
    
		\resizebox{0.47\textwidth}{!}{\begin{tikzpicture}
            \draw[dashed] (1.5,7) rectangle (5.5,8.5);
		  \draw[thick] (0,0) rectangle (8.5,8.5);
		  \draw[thick] (0,0) rectangle (6.5,6.5);
		  \draw[thick] (6.5,0) rectangle (8.5,2);
		   \draw[thick] (0,8.5) rectangle (1.5,7);
		    \draw[thick] (5.5,8.5) rectangle (7,7);
		   \draw[thick] (8.5,8.5) rectangle (7,7);
            \draw[thick] (6.5,7) rectangle (8.5,5);
            
              \draw (7.5, 1) node {Structure};
              \draw (7.5, 6) node {Structure};
              \draw (7.5, 3.75) node {$.$};
              \draw (7.5, 3.25) node {$.$};
              \draw (7.5, 3.5) node {$.$};
              \draw (0.75, 7.75) node {Structure};
              \draw (6.25, 7.75) node {Structure};
              \draw (7.75, 7.75) node {Structure};
              \draw (3.25, 7.75) node {$.$};
              \draw (3.5, 7.75) node {$.$};
              \draw (3.75, 7.75) node {$.$};
  			 \draw (3.25, 3.25) node {Structure};
  			 \draw (3.25, 6.75) node {Enforcer and Partition tasks};

            \draw[thick] (-0.15, 6.55) -- (-0.15, 6.95);
           \draw[thick] (-0.05, 6.5) -- (-0.25,6.5);
           \draw[thick] (-0.05,7) -- (-0.25,7);
            \draw (-0.15,6.75) node[anchor=east] {$D$} ;
            
 \draw[thick] (0.05,-0.15) -- (6.45,-0.15);
           \draw[thick] (0,-0.05) -- (0,-0.25);
           \draw[thick] (6.5,-0.05) -- (6.5,-0.25);
            \draw (3.25, -0.4) node {$nD$};

        \draw[thick] (6.55,-0.15) -- (8.45,-0.15);
           \draw[thick] (6.5,-0.05) -- (6.5,-0.25);
           \draw[thick] (8.5,-0.05) -- (8.5,-0.25);
            \draw (7.5, -0.4) node {$(x^*+1)D$};

		\end{tikzpicture}}\hspace{30pt}
        \resizebox{0.45\textwidth}{!}{\begin{tikzpicture}
            \draw[dashed] (0,0) rectangle (8,1.5);
		  \draw[thick] (0,0) rectangle (1.25,1.25);
            \draw (0.625,0.625) node {\small Enforcer} ;
            \draw (2.625,0.625) node {$\dots$};
		  \draw[thick] (4,0) rectangle (5.25,1.25);
            \draw (4.625,0.625) node {\small Enforcer} ;
            
            \draw[thick] (5.25,0) rectangle (5.75,1.5);
            \draw (5.5,0.75) node[rotate=90] {\small Partition 1} ;
            \draw[thick] (5.75,0) rectangle (6.25,1.5);
            \draw (6,0.75) node[rotate=90] {\small Partition 2} ;
            \draw[thick] (7.4,0) rectangle (7.9,1.5);
            \draw (7.65,0.75) node[rotate=90] {\small Partition n} ;
            \draw (6.825,0.75) node {$\dots$};

            \draw[thick] (8.15, 0.05) -- (8.15, 1.45);
           \draw[thick] (8.05, 0) -- (8.25,0);
           \draw[thick] (8.05,1.5) -- (8.25,1.5);
            \draw (8.15,0.75) node[anchor=west] {$D$} ;

            \draw[thick] (0.05,-0.15) -- (5.2,-0.15);
           \draw[thick] (0,-0.05) -- (0,-0.25);
           \draw[thick] (5.25,-0.05) -- (5.25,-0.25);
            \draw (2.625, -0.4) node {$2nB^3$};
            \draw[thick] (5.3,-0.15) -- (7.95,-0.15);
           \draw[thick] (5.25,-0.05) -- (5.25,-0.25);
           \draw[thick] (8,-0.05) -- (8,-0.25);
            \draw (6.625, -0.4) node {$n(B^3+B)$};

		\end{tikzpicture}}
		\caption{Geometric Packing of the squares defined in the reduction, in the case where the 3-Partition instance admits a Partition. A zoomed-in depiction of the region where Enforcer and Partition tasks go can be found at the left, where each Partition block represents a pile of three Partition tasks whose side lengths sum up to $D$. Furthermore, in the context of the Demand Allocation problem, if there exists a feasible allocation of the tasks into a bin where the structure task of side length $nD$ starts at the leftmost time slot and the structure tasks of side length $(x^*+1)D$ are allocated to its right, this is the only possible allocation.}
		\label{fig:leung-packing}
	\end{figure}
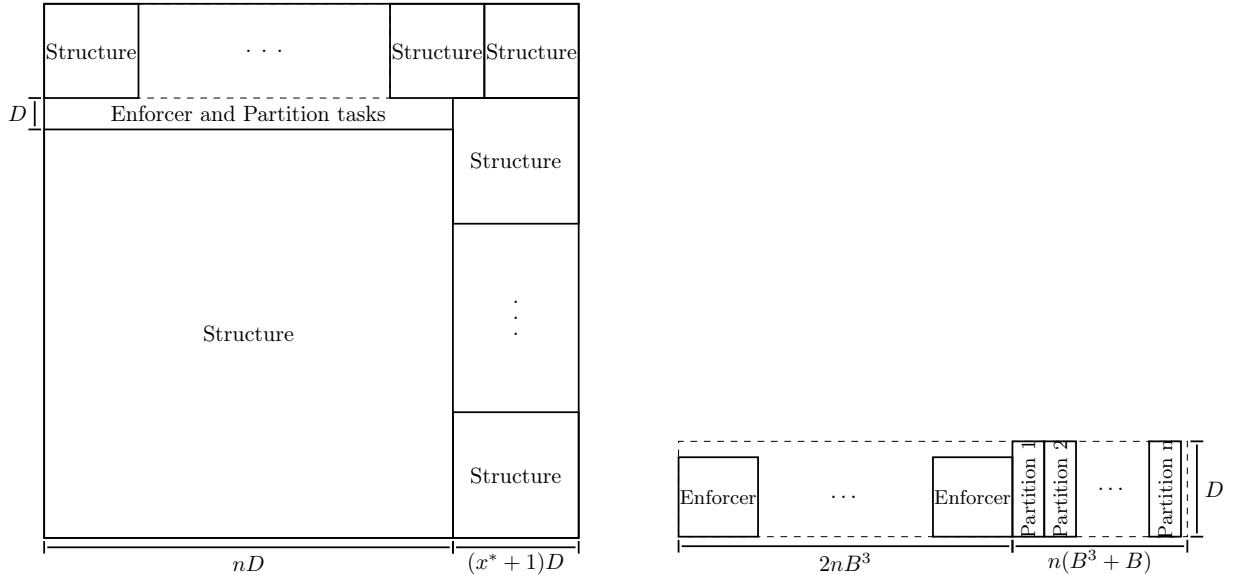 

We will now prove that, if the Demand Allocation instance admits a feasible allocation into a bin, then the 3-Partition instance admits a partition.

We will distinguish cases depending on where the structure task of side length $nD$ is placed. We will assume w.l.o.g. that this task starts in one of the first $D$ time slots. If it ends in one of the last $D$ time slots, we can simply reflect the allocation with respect to time slot $\lceil\frac{T}{2}\rceil$; if it starts after the first $D$ time slots, and ends before the last $D$ time slots, then every other structure task overlaps with this task, implying that the total area of other structure tasks in the allocation is at most $x^*(x^*+1)(x^*+2)D^2$, which is not possible as their total area is ${x^*}^2(x^*+2)D^2 + x^*(x^*+1)^2D^2$ which is strictly larger.

Suppose first that the structure task of side length $nD$ does not start in the leftmost time slot. This means that every structure task of side length $(x^*+1)D$ overlaps with this task and hence, since the total height of such two tasks would be $nD+(x^*+1)D = N$, they are all allocated into disjoint intervals of time slots. Furthermore, if there were some structure task of side length $(x^*+1)D$ using the time slot $x^*(x^*+1)D +1$, every structure task of side length $x^*D$ would overlap with it, which is not possible as they all must use that time slot and their total height is exactly $N$. Consequently, the structure tasks of side length $(x^*+1)D$ must start at the leftmost time slot of the timeline and must be contiguous, while the structure tasks of side length $x^*D$ must form a pile to the right of the structure task of side length $nD$ (see Figure~\ref{fig:structure_1_leung}).

We will now prove that the tasks can be rearranged so that every Enforcer and Partition task is allocated into the time slots $\{nD+1,\dots,(n+1)D\}$. Indeed, since structure tasks of side length $(x^*+1)D$ are allocated starting from the leftmost time slot and the structure tasks of side length $x^*D$ fill completely the time slot $(n+1)D+1$, Partition and Enforcer tasks can be decomposed into three disjoint (possibly empty) allocations: the ones to the left of the structure task of side length $nD$, the ones between the structure task of side length $nD$ and the the pile of structure tasks of length $x^*D$, and the ones to the right of the pile of structure tasks of side length $x^*D$ (see Figure~\ref{fig:structure_1_leung}). Suppose that the structure tasks of side length $x^*D$ are drawn geometrically as squares (i.e. without slicing them; since they form a pile this is possible, but Enforcer and Partition tasks may not be geometrically embedded after that), that the structure task of side length $nD$ is also drawn geometrically as a square at the bottom of the timeline, and that the structure tasks of side length $(x^*+1)D$ are drawn geometrically as squares on top of the structure task of side length $nD$ (which is again possible as there are at most two such task per time slot). Enforcer and Partition tasks are then allocated into three geometric regions defined by the boundary of the outside region, the boundary of the structure task of side length $nD$, the boundary of the structure tasks of side length $(x^*+1)D$, and the boundary of the pile of structure tasks of side length $x^*D$. Furthermore, by construction, these three regions fit together into a rectangle of width $D$ and height $nD$, meaning that the structure task of side length $nD$ can be shifted to the left so as to start at the leftmost time slot, the structure tasks of side length $x^*D$ can be shifted to the right so as to finish at the rightmost time slot, and the three regions (with their respecting tasks allocated) can be placed into the free rectangular area of width $D$ and height $nD$ at the time slots $\{nD+1,\dots,(n+1)D\}$ (see Figure~\ref{fig:structure_1_leung}).

 	\begin{figure}
	    \centering
    
		\resizebox{0.46\textwidth}{!}{\begin{tikzpicture}
            \fill[color=gray!30] (0,0) rectangle (0.2,7);
            \fill[pattern=north east lines] (7.2,0) rectangle (7.3,3);
            \fill[pattern=north east lines] (7.3,0) rectangle (7.4,1.5);
            \fill[pattern=north west lines] (8.9,0) rectangle (9,6);
            \fill[pattern=north west lines] (8.8,1.5) rectangle (8.9,4.5);
            \fill[pattern=north west lines] (8.7,3) rectangle (8.8,4.5);
            \fill[pattern=north east lines] (7.2,6) rectangle (7.5,7);
            \fill[pattern=north east lines] (7.2,4.5) rectangle (7.4,6);
            
            %\draw[dashed] (3,7) rectangle (7,8.5);
		  \draw[thick] (0,0) rectangle (9,9);
		  \draw[thick] (0.2,0) rectangle (7.2,7);
		  \draw[thick] (0,7) rectangle (2,9);
            \draw[thick] (5.5,7) rectangle (7.5,9);
		   \draw[thick] (9,9) rectangle (7.5,7.5);
		    \draw[thick] (7.5,6) rectangle (9,7.5);
		   \draw[thick] (7.4,0) rectangle (8.9,1.5);
            \draw[thick] (7.3,1.5) rectangle (8.8,3);
            \draw[thick] (7.2,3) rectangle (8.7,4.5);
            \draw[thick] (7.4,4.5) rectangle (8.9,6);
            
              \draw (1, 8) node {Structure};
              \draw (6.5, 8) node {Structure};
              %\draw (7.6, 4) node {$.$};
              %\draw (7.6, 4.25) node {$.$};
              %\draw (7.6, 4.5) node {$.$};
              \draw (8.15, 0.75) node {Structure};
              \draw (8.05, 2.25) node {Structure};
              \draw (8.25, 8.25) node {Structure};
              \draw (4, 8) node {$.$};
              \draw (3.5, 8) node {$.$};
              \draw (3.75, 8) node {$.$};
  		 \draw (3.45, 3.25) node {Structure};
              \draw (8.25, 6.75) node {Structure};
              \draw (7.95, 3.75) node {Structure};
              \draw (8.15, 5.25) node {Structure};
  			 %\draw (5.25, 6.75) node {Enforcer and Partition tasks};

            %\draw[thick] (8.65, 6.55) -- (8.65, 6.95);
           %\draw[thick] (8.55, 6.5) -- (8.75,6.5);
           %\draw[thick] (8.55,7) -- (8.75,7);
            %\draw (8.65,6.75) node[anchor=west] {C} ;
            
 %\draw[thick] (2.05,-0.15) -- (8.45,-0.15);
           %\draw[thick] (2,-0.05) -- (2,-0.25);
           %\draw[thick] (8.5,-0.05) -- (8.5,-0.25);
            %\draw (5.25, -0.4) node {$nC$};

        %\draw[thick] (0.05,-0.15) -- (1.95,-0.15);
           %\draw[thick] (2,-0.05) -- (2,-0.25);
           %\draw[thick] (0,-0.05) -- (0,-0.25);
            %\draw (1, -0.4) node {$(x^*+1)C$};

		\end{tikzpicture}}\hspace{20pt}
        \resizebox{0.17\textwidth}{!}{\begin{tikzpicture}
            \fill[color=gray!30] (0,0) rectangle (0.4,9.5);
            \fill[pattern=north east lines] (0.4,0) rectangle (0.6,4);
            \fill[pattern=north east lines] (0.6,0) rectangle (0.8,2);
            \fill[pattern=north west lines] (0.8,0) rectangle (1,8);
            \fill[pattern=north west lines] (0.6,2) rectangle (0.8,6);
            \fill[pattern=north west lines] (0.4,4) rectangle (0.6,6);
            \fill[pattern=north east lines] (0.4,6) rectangle (0.8,9.5);
            \fill[pattern=north east lines] (0.8,8) rectangle (1,9.5);
            %\draw[dashed] (3,9) -- (3,0) -- (0,0) -- (0,8) -- (1,8) -- (1,9);
            %\draw[dashed] (1,0) -- (1,8);
            \draw[thick] (3,10.5) -- (3,0) -- (-0.5,0);
		  \draw[thick] (1,8) rectangle (3,10);
            \draw[thick] (-0.5,9.5) -- (1,9.5) -- (1,10.5);
            \draw[thick] (0,0) -- (0,9.5);
            %\draw[thick] (1,4) -- (1,6.5);
            \draw (2,9) node {\large Structure} ;
		  %\draw[thick] (0.5,5.5) rectangle (2,7);
            %\draw (1.25,6.25) node {Structure} ;
            \draw[thick] (1,0) rectangle (3,2);
            \draw (2,1) node {\large Structure} ;
		  \draw[thick] (1,2) rectangle (3,4);
            \draw (2,3) node {\large Structure} ;
            \draw[thick] (1,4) rectangle (3,6);
            \draw (2,5) node {\large Structure} ;
		  \draw[thick] (1,6) rectangle (3,8);
            \draw (2,7) node {\large Structure} ;
            %\draw (2, 5) node {$.$};
            %\draw (2, 5.25) node {$.$};
            %\draw (2, 5.5) node {$.$};

            \draw[thick] (0.4,0) -- (0.4,9.5);
            \draw[thick] (0.8,0) -- (0.8,2) -- (0.6,2) -- (0.6,4) -- (0.4,4) -- (0.4,6) -- (0.8,6) -- (0.8,8) -- (1,8) -- (1,9.5);
            %\draw (0.6,5.35) node {\Large $\vdots$};
            %\draw[thick] (1,6.5) -- (0.4,6.5) -- (0.4,5.5);

            %\draw[thick] (5.25,0) rectangle (5.75,1.5);
            %\draw (5.5,0.75) node[rotate=90] {\small Partition 1} ;
            %\draw[thick] (5.75,0) rectangle (6.25,1.5);
            %\draw (6,0.75) node[rotate=90] {\small Partition 2} ;
            %\draw[thick] (7.4,0) rectangle (7.9,1.5);
            %\draw (7.65,0.75) node[rotate=90] {\small Partition n} ;
            %\draw (6.825,0.75) node {$\dots$};

            %\draw[thick] (8.15, 0.05) -- (8.15, 1.45);
           %\draw[thick] (8.05, 0) -- (8.25,0);
           %\draw[thick] (8.05,1.5) -- (8.25,1.5);
            %\draw (8.15,0.75) node[anchor=west] {C} ;

            %\draw[thick] (0.05,-0.15) -- (5.2,-0.15);
           %\draw[thick] (0,-0.05) -- (0,-0.25);
           %\draw[thick] (5.25,-0.05) -- (5.25,-0.25);
            %\draw (2.625, -0.4) node {$2nB^3$};
            %\draw[thick] (5.3,-0.15) -- (7.95,-0.15);
           %\draw[thick] (5.25,-0.05) -- (5.25,-0.25);
           %\draw[thick] (8,-0.05) -- (8,-0.25);
            %\draw (6.625, -0.4) node {$n(B^3+B)$};

		\end{tikzpicture}}\hspace{40pt}
        \resizebox{0.185\textwidth}{!}{\begin{tikzpicture}
            \draw[dashed] (0,0) rectangle (2,10);

            \draw[thick] (0.1,0) rectangle (1.9,1.8);
            \draw[thick] (0,1.8) rectangle (1.8,3.6);
            \draw[thick] (0.2,4.6) rectangle (2,6.4);
            \draw (1,4.2) node {$\vdots$};

            \draw[thick] (0,6.4) rectangle (0.7,7.1);
            \draw[thick] (0.7,6.4) rectangle (1.35,7.05);
            \draw[thick] (1.35,6.4) rectangle (2,7.05);
            \draw[thick] (0,7.1) rectangle (0.66,7.76);
            \draw[thick] (0.66,7.1) rectangle (1.34,7.78);
            \draw[thick] (1.34,7.1) rectangle (2,7.76);
            \draw[thick] (0,9.2) rectangle (0.67,9.87);
            \draw[thick] (0.67,9.2) rectangle (1.35,9.88);
            \draw[thick] (1.35,9.2) rectangle (2,9.85);
            \draw (1,8.56) node {$\vdots$};

            \draw (0.33,6.76) node[rotate=45] {\scriptsize $B^3+a_i$};
            \draw (1,6.75) node[rotate=45] {\scriptsize $B^3+a_j$};
            \draw (1.65,6.75) node[rotate=45] {\scriptsize $B^3+a_k$};
            \draw (1,0.9) node {Enforcer};
            \draw (0.9,2.7) node {Enforcer};
            \draw (1.1,5.5) node {Enforcer};

            \draw[thick] (2.15, 0.05) -- (2.15, 6.35);
            \draw[thick] (2.05, 0) -- (2.25,0);
            \draw[thick] (2.05,6.4) -- (2.25,6.4);
            \draw (2.15,3.2) node[anchor=west] {\small $2nB^3$} ;

            \draw[thick] (2.15, 6.45) -- (2.15, 9.95);
            \draw[thick] (2.05, 6.4) -- (2.25,6.4);
            \draw[thick] (2.05,10) -- (2.25,10);
            \draw (2.15,8.2) node[anchor=west] {\small $nB^3+B$} ;

		\end{tikzpicture}}
		\caption{The case when the structure task of side length $nD$ does not start at the leftmost time slot. In the middle figure, structure tasks of side length $x^*D$ are pushed to the right and the structure task of side length $nD$ is pushed to the left, so that Enforcer and Partition tasks are all allocated into the colored rectangular region of height $nD$ and width $D$. In the right, a depiction of how Enforcer and Partition tasks must be allocated, provided that they all fit.}
		\label{fig:structure_1_leung}
	\end{figure}
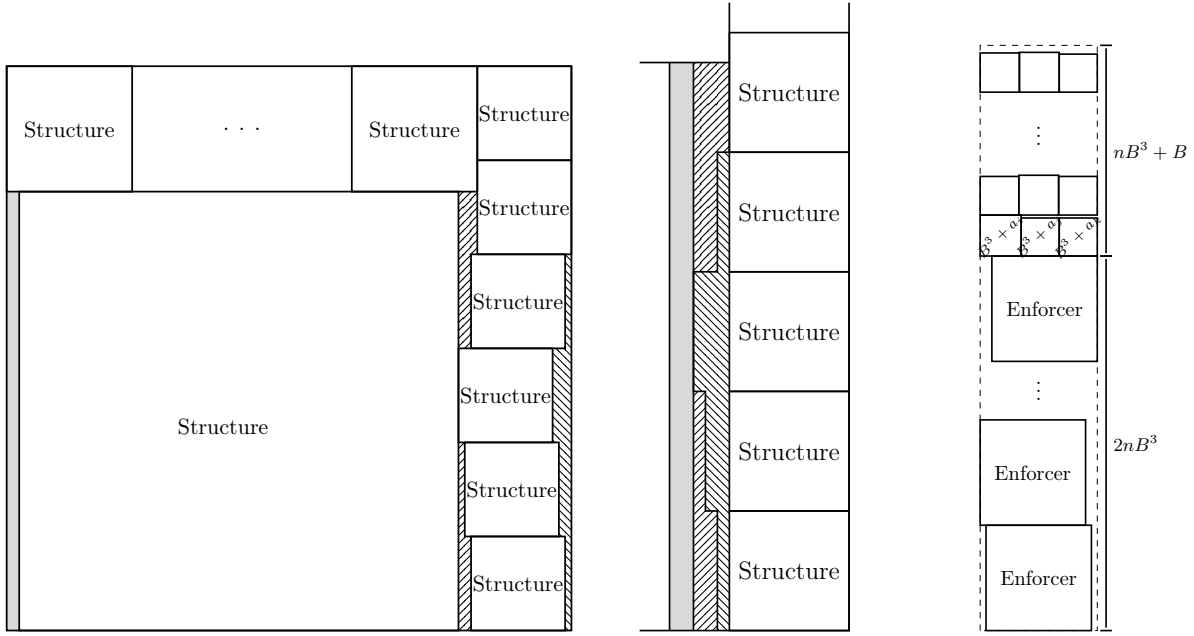 

We will look now the aforementioned rectangular region where Enforcer and Partition tasks are placed. Observe first that, since Enforcer tasks have side length $3B^3$ and the region has width $3B^3+B$, all of them must form a pile of total height $2nB^3$. Since the Partition tasks have width at least $B^3$, all of them must overlap with the aforementioned pile, implying that in each time slot there must be at most $n$ Partition tasks; if there were more, their total height together with the Enforcer tasks would be at least $(3n+1)B^3 > 3nB^3+nB = nD$. As the total width of the Partition tasks is $3nB^3 + nB = nD$ (i.e. $n$ times the width of the region), there must exist a way to partition them into groups of three so that the total width of each group is exactly $D$; if not, for any such partition into groups of three there would a group whose total width is larger that $D$, hence not fitting into the region and forcing to put at least $n+1$ tasks into some time slot, which is not possible. If a group of three Partition tasks has total width exactly $3B^3+B$, this means that the corresponding numbers from the $3$-Partition instance sum up to exactly $B$, implying that there is a $3$-Partition as desired (see Figure~\ref{fig:structure_1_leung}).

Consider now the case when the structure task of side length $nD$ is allocated at the leftmost time slot. We will partition the remaining structure tasks into two sets: $S_1$ are the tasks that overlap with the structure task of side length $nD$, and $S_2$ are the tasks that do not; the total side length of the tasks in both sets is $x^*(x^*+2)D+x^*(x^*+1)D$. We will prove that the total side length of the tasks in $S_1$ must be exactly $x^*(x^*+1)D$: It cannot be smaller that $x^*(x^*+1)D$ because the total side length of tasks in $S_2$ is at most $N = x^*(x^*+2)D$ as they all occupy the time slot $x^*(x^*+1)D+1$, and it cannot be larger because, in that case, there would some task from $S_1$ occupying the time slot $x^*(x^*+1)D+1$. This implies that the total side length of $S_2$ is at most $N-x^*D = x^*(x^*+1)D$, which is not possible as the total side length of $S_1$ is strictly smaller than $N$ as no task can occupy the time slot $x^*(x^*+1)D$ and the rightmost time slot simultaneously.

We will now use the following fact stated by Leung et al.~\cite{leung}: The equation $k_1 x^*D + k_2 (x^*+1)D = x^*(x^*+1)D$ has only two solutions: Either $k_1 = x^*+1$ and $k_2=0$, or $k+1 = 0$ and $k_2 = x^*$. In the context of the previous solution, this means that either $S_1$ corresponds to $x^*$ tasks of side length $x^*+1$, or to $x^*+1$ tasks of side length $x^*$. The first case is the same as the one already considered below, so we will focus on the second case (see Figure~\ref{fig:leung-packing}). 

If we draw the structure tasks of side length $x^*D$ geometrically as squares touching the upper boundary of the region, it can be seen that Enforcer and Partition tasks are allocated into a rectangular region of width $nD$ and height $D$. Similarly to the previous case, since Enforcer tasks have side length $3B^3$ and the region has width $3B^3+B$, all of them use in total $2nB^3$ time slots. Since the Partition tasks have side length at least $B^3$, it is not possible to place more than $n$ Partition tasks into disjoint intervals of time slots. As the total side length of the Partition tasks is $3nB^3 + nB = nD$ (i.e. $n$ times the height of the region), there must exist a way to partition them into groups of three so that the total side length of each group is exactly $D$; if not, for any such partition into groups of three there would a group whose total side length is larger that $D$, hence not fitting into the region and forcing to put at least $n+1$ tasks into disjoint intervals of time slots, which is not possible. If a group of three Partition tasks has total width exactly $3B^3+B$, this means that the corresponding numbers from the $3$-Partition instance sum up to exactly $B$, implying that there is a $3$-Partition as desired. This concludes the proof of Lemma~\ref{lem:hardness-leung}.

    \begin{corollary}
 For any $\varepsilon > 0$, there is no $(2-\varepsilon)$-approximation algorithm for 2D Demand Bin Packing for square tasks and square bins, unless P=NP.
 \end{corollary}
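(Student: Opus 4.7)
The plan is to mimic exactly the argument used in the analogous corollary for short tasks (which in turn exploited the NP-hardness of deciding whether $OPT = 1$). The entire non-trivial work has already been done in Lemma~\ref{lem:hardness-leung}, so the corollary follows by a one-line reduction from the decision version.

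More concretely, suppose for contradiction that, for some fixed $\varepsilon > 0$, there is a polynomial-time $(2-\varepsilon)$-approximation algorithm $\mathcal{A}$ for 2D Demand Bin Packing restricted to square tasks and square bins. I would use $\mathcal{A}$ to solve the Demand Allocation problem (i.e., decide whether $OPT = 1$) on the same restricted class of instances as follows: given an instance, run $\mathcal{A}$ and output \emph{yes} if $\mathcal{A}$ returns a packing using exactly one bin, and \emph{no} otherwise. Correctness is immediate: if $OPT = 1$, then $\mathcal{A}$ uses at most $(2-\varepsilon)\cdot 1 < 2$ bins and hence exactly $1$; conversely, if $\mathcal{A}$ outputs a single-bin packing, that packing itself certifies $OPT = 1$. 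Since $\mathcal{A}$ runs in polynomial time, this yields a polynomial-time algorithm for the Demand Allocation problem on square tasks and square bins, contradicting Lemma~\ref{lem:hardness-leung} unless P $=$ NP.

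There is really no obstacle to speak of here; the proof is a one-paragraph gap-introducing reduction of exactly the same flavor as the corollary for short tasks that appears earlier in this appendix. The only thing to double-check is that Lemma~\ref{lem:hardness-leung} genuinely produces an instance that falls inside the class handled by the hypothesized approximation algorithm (square tasks, square bin), which it does by construction.
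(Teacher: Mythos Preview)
Your proposal is correct and follows essentially the same approach as the paper: assume a $(2-\varepsilon)$-approximation exists, observe that on instances with $OPT=1$ it must return a single bin, and conclude that this would solve the Demand Allocation problem for square tasks and square bins in polynomial time, contradicting Lemma~\ref{lem:hardness-leung}. Your write-up is in fact slightly more explicit about the two directions of the equivalence, but the argument is identical in substance.
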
 
  \begin{proof}
Let us assume that we have a $(2-\varepsilon)$-approximation algorithm for 2D Demand Bin Packing for square tasks. Given an instance of the Demand Allocation problem restricted to squares tasks and square bins, we can decide whether $OPT=1$ by applying this algorithm since a $(2-\varepsilon)$-approximation returns the optimal solution when $OPT=1$. This means that we can solve Demand Allocation restricted to square tasks and square bins in polynomial time, a contradiction to Lemma~\ref{lem:hardness-leung}  unless P=NP.
 \end{proof}
 
\section{2D Demand Bin Packing for Square Tasks vs 2D Geometric Bin Packing for Square Tasks}\label{app:gap_instance}

In this section, we will prove that there exist instances of 2D Demand Bin Packing for square tasks that can be allocated into a single square bin but, when considered as an instance of 2D Geometric Bin Packing, more than a bin is required. This implies that just applying an algorithm for 2D Geometric Bin Packing to the 2D Demand Bin Packing instance might not lead to tight results, even in the presence of square tasks and square bins. The construction is a slight adaptation of the construction due to Gálvez et al.~\cite{GGJK21}.

\begin{lemma}\label{lem:gap_demand}
There exist instances of 2D Demand Bin Packing with square tasks that can be allocated into a single bin, but there is no geometric packing of the tasks into the bin.
 \end{lemma}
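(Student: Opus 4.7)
The plan is to exhibit the explicit instance shown on the right of Figure~\ref{fig:gapinstance}, a slight adaptation of the construction of Gálvez et al.~\cite{GGJK21}. The bin is a square of side length $10.5$, and the tasks consist of two squares of side $5$, two of side $4$, one of side $2.5$, and nine of side $1.5$, for a total area of $108.5$ (so only $1.75$ units remain unused in any feasible packing). The remainder of the argument has two directions: producing a demand allocation into the single bin, and ruling out a geometric packing.

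Feasibility in the demand allocation setting can be read directly off the figure. The two $5\times 5$ tasks start at times $0$ and $5.5$; the two $4\times 4$ tasks are placed ``above'' them, starting at times $0$ and $6.5$; the $2.5\times 2.5$ task starts at time $4$; seven of the $1.5\times 1.5$ tasks form a row across the top of the bin; an eighth fills a $1.5\times 1.5$ pocket in the middle band; and the ninth is placed at time $5$, so that its height $1.5$ contributes to time slots $[5,6.5]$, where the total load reaches exactly $10.5$. I would verify feasibility by computing the load profile at the $O(1)$ breakpoints where it changes and checking that each value is at most $10.5$.

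For the geometric direction, I would argue by a case analysis on the placement of the two $5\times 5$ squares. Since $2\cdot 5 = 10 \le 10.5 < 3\cdot 5$, they must lie inside a common horizontal or vertical band of width $5$; by symmetry, one may assume they occupy a bottom band $y\in[0,5]$, with a horizontal gap of at most $0.5$ between them. A similar argument applied to the two $4\times 4$ squares and the $2.5\times 2.5$ square forces, up to reflection, that the free region compatible with hosting a $1.5\times 1.5$ square splits into a top strip of height $1.5$ and width $10.5$, a single $1.5\times 1.5$ pocket inside the middle band, and residual fragments whose widths are all strictly smaller than $1.5$. Seven $1.5\times 1.5$ squares saturate the top strip, an eighth fits into the pocket, but the ninth has no available space. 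The main obstacle is making this case analysis complete: the instance is engineered so that the demand allocation just barely fits, and the decisive feature distinguishing the two settings is that the ninth task is allowed to contribute its height to two vertically disjoint free regions sharing the same time range — here the $0.5$-wide strip between the two $5\times 5$ squares and the residual $1\times 1.5$ gap in the middle band — which is permitted by a demand allocation but impossible for a single geometric square.
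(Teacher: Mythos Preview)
Your instance is the paper's (just scaled by $1/2$; the paper uses side length $21$ to keep all data integral, as the problem is defined over $\mathbb{N}$), and the demand-allocation direction is fine. The geometric direction, however, has a genuine gap.

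The sentence ``Since $2\cdot 5 = 10 \le 10.5 < 3\cdot 5$, they must lie inside a common horizontal or vertical band of width $5$'' is false. Two $5\times 5$ squares in a $10.5\times 10.5$ bin can sit in opposite corners, say at $(0,0)$ and $(5.5,5.5)$; then no axis-parallel line meets the interiors of both. The paper treats this as a separate case: the complement of the two big squares decomposes into an $11\times 11$ square $A$ and an L-shaped region $A'$ (in the $21$-scale), and one argues that $i_3,i_4$ (area $64$ each) cannot both go into $A$ (area $121$) or into $A'$ (area $120$), yet $i_5$ cannot coexist with an $8$-square in either region --- a contradiction obtained without ever placing the nine small squares.

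Even in the ``common line'' case your reduction is too aggressive: you assume both big squares sit at $y=0$ with a horizontal gap $\le 0.5$, but nothing forces them to be at the same height. The paper parametrises by $\ell_{\mathrm{top}},\ell_{\mathrm{bot}},r_{\mathrm{top}},r_{\mathrm{bot}}$ (first showing each big square must touch a vertical boundary, using the free-area budget of $7$), and then does a short subcase analysis on which of these are $\ge 8$, again reaching a contradiction via $i_3,i_4,i_5$ without tracking the $3\times 3$ squares. Your plan of pinning down the full configuration and then counting $1.5$-square pockets is much more fragile; the paper's area-based shortcuts are what make the case analysis complete.
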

\begin{proof}

Consider the following $2D$ Demand Bin Packing instance:

\begin{itemize}
    \item The bins are defined by a timeline of length $T=21$ and capacity $C=21$.
    \item There are two square tasks, denoted by $i_1$ and $i_2$, of side length $10$.
    \item There are two square tasks, denoted by $i_3$ and $i_4$, of side length $8$.
    \item There is one square task, denoted by $i_5$, of side length $5$.
    \item There are nine square tasks, denoted by $i_6,\dots,i_{14}$, of side length $3$.
\end{itemize}

Figure~\ref{fig:gapinstance} shows a way to allocate all the tasks into a single bin. We will now prove that there is no geometric packing of these squares into a square region of side length $21$. 

For the sake of contradiction, suppose that there exists a geometric packing of the squares into the region. We assume that the coordinates of the squares in the region are all integral, simply by pushing them  to the bottom and to the left as much as possible; they either touch the boundary of the region, or the boundary of some other square, which leads to integral coordinates as the side lengths are all integral. We observe first that, since the area of the region is $441$ and the total area of the squares is $434$, only a total area of $7$ can be free in the region. This implies that squares $i_1, i_2, i_3$ and $i_4$ cannot be placed at distance $1$ or $2$ from any boundary, as this would imply that the area between the square and the boundary of the region is free (as every remaining square has side length at least $3$) and larger than $7$. Hence, every such square is either at the boundary of the region, or at distance at least three from every boundary.

We will distinguish two cases depending on the positions of $i_1$ and $i_2$ in the packing:

\begin{itemize}
    \item Suppose that any vertical or horizontal line that crosses the region completely does not intersect the interiors of both $i_1$ and $i_2$ simultaneously.  Since these squares cannot be placed at distance $1$ or $2$ from a boundary, this implies that $i_1$ and $i_2$ are placed into opposite corners of the region (see Figure~\ref{fig:gap_structure}). Such a placement defines two disjoint regions where the remaining squares must be placed: a square region of side length $11$, which we denote by $A$, and the complement of the union of $A$, $i_1$ and $i_2$ denoted by $A'$. 

    The area of $A$ is $121$ and the area of $A'$ is $120$; this means that $i_3$ and $i_4$, since each square has area $64$, cannot be placed into the same region. This leads to a contradiction because $i_5$ does not fit in any of these regions together with a square of side length $8$.

    	\begin{figure}
	    \centering
    
		\scalebox{.6}{\begin{tikzpicture}      
            \draw[thick] (0,0) rectangle (10.5,10.5);
  		 \draw[fill=lightgray!15] (0,0) rectangle (5,5);
  		 \draw[fill=lightgray!15] (5.5,5.5) rectangle (10.5,10.5);
  		 \draw (2.5, 2.5) node {$\mathbf{i_1}$};
  		 \draw (8, 8) node {$\mathbf{i_2}$};
  		 
  		  \draw[thick] (-0.1, 10.5) -- (-0.5,10.5);
  		  \draw[thick] (-0.3, 10.45) -- (-0.3,5.05);
  		  \draw[thick] (-0.1, 5) -- (-0.5,5);
  		   \draw (-0.3, 7.75) node[anchor=east] {$11$};

  		  \draw[thick] (0,5) rectangle (5.5,10.5);
  		  \fill[pattern = north east lines, pattern color = black!60] (0,5) rectangle (5.5,10.5);
          \draw (2.75, 7.75) node {$\mathbf{A}$};
          \draw (7.75, 2.75) node {$\mathbf{A'}$};
  		   
		\end{tikzpicture}}\hspace{13pt}
        \scalebox{.6}{\begin{tikzpicture}
            
		  \draw[thick] (0,0) rectangle (10.5,10.5);
  		 \draw[fill=lightgray!15] (0,2) rectangle (5,7);
  		 \draw[fill=lightgray!15] (5.5,4) rectangle (10.5,9);
  		 \draw (2.5, 4.5) node {$\mathbf{i_1}$};
  		 \draw (8, 6.5) node {$\mathbf{i_2}$};
  		 
  		  \draw[thick] (5.5, 3.90) -- (5.5,2.05);
  		  \draw[thick] (5.3, 3.95) -- (5.7,3.95);
  		  \draw[thick] (5.3, 2) -- (5.7,2);
  		 \draw (5.9, 3) node {$ \ge 3$};
  		 
          \draw[thick] (-0.1, 10.5) -- (-0.5,10.5);
  		  \draw[thick] (-0.3, 10.45) -- (-0.3,7.05);
  		  \draw[thick] (-0.1, 7) -- (-0.5,7);
  		   \draw (-0.6, 8.75) node {$\ell_{top}$};
  		   
          \draw[thick] (-0.1, 2) -- (-0.5,2);
  		  \draw[thick] (-0.3, 1.95) -- (-0.3,0.05);
  		  \draw[thick] (-0.1, 0) -- (-0.5,0);
  		   \draw (-0.6, 1) node {$\ell_{bot}$};
  		   
          \draw[thick] (10.6, 4) -- (11,4);
  		  \draw[thick] (10.8, 3.95) -- (10.8,0.05);
  		  \draw[thick] (10.6, 0) -- (11,0);
  		   \draw (11.1, 2) node {$r_{bot}$};
           \draw[thick] (10.6, 10.5) -- (11,10.5);
  		  \draw[thick] (10.8, 10.45) -- (10.8,9.05);
  		  \draw[thick] (10.6, 9) -- (11,9);
  		   \draw (11.1, 9.75) node {$r_{top}$};
  		   
		\end{tikzpicture}}
		\caption{Case distinction in the proof of Lemma~\ref{lem:gap_demand}. Left: when no vertical nor horizontal line intersect the interior of $i_1$ and $i_2$ simultaneously. Right: When there exists an horizontal line intersecting the interior of $i_1$ and $i_2$ simultaneously.}
	    \label{fig:gap_structure}
	\end{figure}
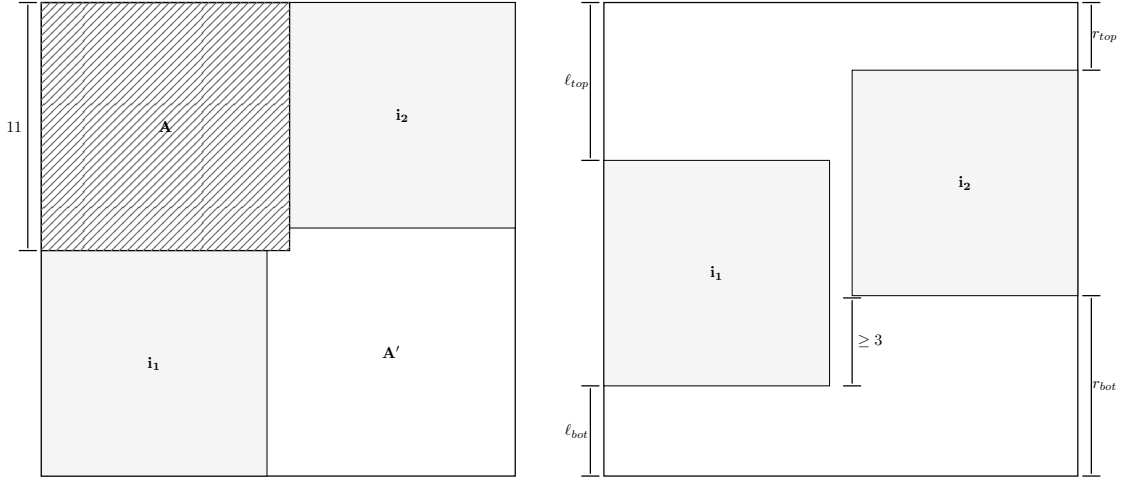

\item Suppose instead that there exists some horizontal line (the case of a vertical line is analogous as we can rotate the packing by $90$ degrees) that intersects the interior of $i_1$ and $i_2$ simultaneously. 

Since the total side length of $i_1$ and $i_2$ is $20$ and the side length of the region is $21$, each square must be placed at a vertical boundary of the region. Furthermore, the rectangular region formed between them must have height at most $7$ because of the fact that this space will remain free since it has width $1$. Thus one of the squares should be at least 3 units higher (or lower) than the other. Without loss of generality, we assume that $i_1$ is packed along the left boundary and $i_2$ at the right boundary, at a higher position than $i_1$ (see Figure~\ref{fig:gap_structure}).

	Let $\ell_{top}$ be the remaining height above $i_1$, $\ell_{bot}$ the remaining height below $i_1$, $r_{top}$ the remaining height above $i_2$, and $r_{bot}$ the remaining height below $i_2$, respectively. The placement of the squares partitions the bin into two disjoint regions, one below $i_1$ and $i_2$, and one above $i_1$ and $i_2$, where the remaining squares must be placed. In order to pack $i_3$ and $i_4$, we need to ensure that either $\ell_{top} \ge 8$ or $\ell_{bot} \ge 8 $, and either $r_{top} \ge 8$ or $r_{bot} \ge 8 $ since the side length of these squares is $8$. The case $\ell_{bot} \ge 8 $ and $r_{top} \ge 8$ cannot happen because $i_2$ must be at a higher position with respect to $i_1$. We will thus look at the remaining cases. 
	
	If $\ell_{top},r_{bot} \ge 8$ (see Figure~\ref{fig:gapcases}), this means that $\ell_{bot}, r_{top}\le 3$, and hence it is not possible to place $i_3$ and $i_4$ in the same region (i.e. either above or below $i_1$ and $i_2$). However, $i_5$ also does not fit below $i_1$ or above $i_2$, and it cannot be placed together with a square of side length $8$ in the remaining area, implying that this case cannot occur.

	If $\ell_{top},r_{top} \ge 8$ (see Figure~\ref{fig:gapcases}), then $\ell_{bot}=0$ and $r_{top}=8$. This implies that the rectangular region between $i_1$ and $i_2$ has height $7$ and width $1$, so every other space in the region must be occupied by some square. However, the rectangular region below $i_2$ has height $3$ and width $11$, and hence cannot be full with squares, as only squares of side length $3$ fit in there, but $11$ is not divisible by $3$. This again leads to a contradiction. The case $\ell_{bot},r_{bot}\ge 8$ is analogous, as we can rotate the packing by $180$ degrees and swap $i_1$ and $i_2$.

    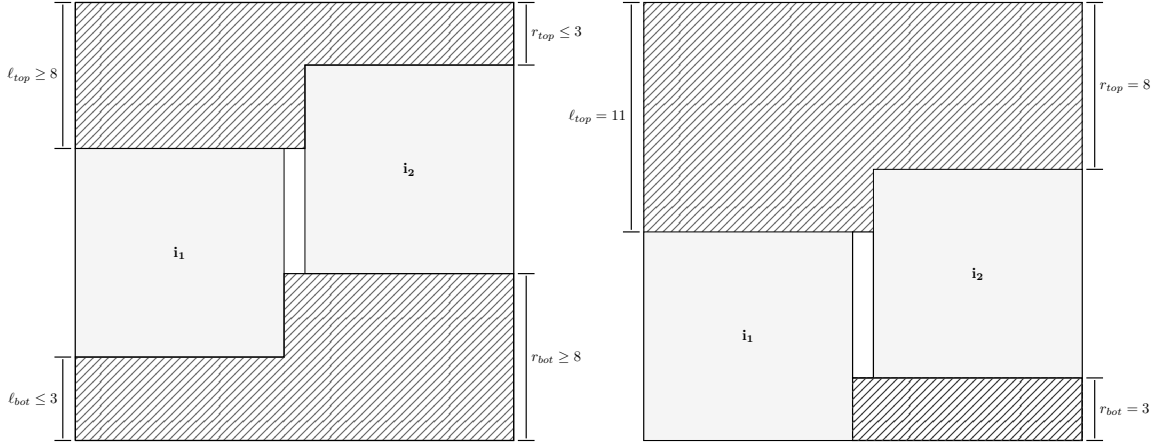
\begin{figure}
	    \centering
    
		\scalebox{.555}{\begin{tikzpicture}      
            
		  \draw[thick] (0,0) rectangle (10.5,10.5);
  		 \draw[fill=lightgray!15] (0,2) rectangle (5,7);
  		 \draw[fill=lightgray!15] (5.5,4) rectangle (10.5,9);
  		 \draw (2.5, 4.5) node {$\mathbf{i_1}$};
  		 \draw (8, 6.5) node {$\mathbf{i_2}$};
  		   		  
  		   %\draw[thick] (0,7) rectangle (5.5,10.5);
  		   %\draw[thick] (5,0) rectangle (10.5,4);
  		   \fill[pattern = north east lines, pattern color = black!60] (0,7) rectangle (5.5,10.5);
              \fill[pattern = north east lines, pattern color = black!60] (10.5,9) rectangle (5.5,10.5);
  		   \fill[pattern = north east lines, pattern color = black!60] (5,0) rectangle (10.5,4);
              \fill[pattern = north east lines, pattern color = black!60] (0,0) rectangle (5,2);
              \draw[thick] (0,0) -- (10.5,0) -- (10.5,4) -- (5,4) -- (5,2) -- (0,2) -- (0,0);
              \draw[thick] (0,10.5)  -- (10.5,10.5) -- (10.5,9) -- (5.5,9) -- (5.5,7) -- (0,7) -- (0,10.5);
  		   %\draw (2.75, 8.75) node {$A_1$};
  		   %\draw (7.75, 2) node {$A_2$};

          \draw[thick] (-0.1, 10.5) -- (-0.5,10.5);
  		  \draw[thick] (-0.3, 10.45) -- (-0.3,7.05);
  		  \draw[thick] (-0.1, 7) -- (-0.5,7);
  		   \draw (-0.3, 8.75) node[anchor=east] {$\ell_{top} \ge 8$};
  		   
          \draw[thick] (-0.1, 2) -- (-0.5,2);
  		  \draw[thick] (-0.3, 1.95) -- (-0.3,0.05);
  		  \draw[thick] (-0.1, 0) -- (-0.5,0);
  		   \draw (-0.3, 1) node[anchor=east] {$\ell_{bot} \le 3$};
  		   
          \draw[thick] (10.6, 4) -- (11,4);
  		  \draw[thick] (10.8, 3.95) -- (10.8,0.05);
  		  \draw[thick] (10.6, 0) -- (11,0);
  		   \draw (10.8, 2) node[anchor=west] {$r_{bot} \ge 8$};
           \draw[thick] (10.6, 10.5) -- (11,10.5);
  		  \draw[thick] (10.8, 10.45) -- (10.8,9.05);
  		  \draw[thick] (10.6, 9) -- (11,9);
  		   \draw (10.8, 9.75) node[anchor=west] {$r_{top} \le 3$};
  		   
		\end{tikzpicture}}\hspace{-12pt}
		\scalebox{.555}{\begin{tikzpicture}
			
            \fill[pattern = north east lines, pattern color = black!60] (0,5) rectangle (5.5,10.5);
              \fill[pattern = north east lines, pattern color = black!60] (10.5,6.5) rectangle (5.5,10.5);
            
		  \draw[thick] (0,0) rectangle (10.5,10.5);
  		 \draw[fill=lightgray!15] (0,0) rectangle (5,5);
  		 \draw[fill=lightgray!15] (5.5,1.5) rectangle (10.5,6.5);
  		 \draw (2.5, 2.5) node {$\mathbf{i_1}$};
  		 \draw (8, 4) node {$\mathbf{i_2}$};
  		 		 
          \draw[thick] (-0.1, 10.5) -- (-0.5,10.5);
  		  \draw[thick] (-0.3, 10.45) -- (-0.3,5.05);
  		  \draw[thick] (-0.1, 5) -- (-0.5,5);
  		   \draw (-0.3, 7.75) node[anchor=east] {$\ell_{top} = 11$};

          \draw[thick] (10.6, 1.5) -- (11,1.5);
  		  \draw[thick] (10.8, 1.45) -- (10.8,0.05);
  		  \draw[thick] (10.6, 0) -- (11,0);
  		   \draw (10.8, 0.75) node[anchor=west] {$r_{bot} = 3$};
           \draw[thick] (10.6, 10.5) -- (11,10.5);
  		  \draw[thick] (10.8, 10.45) -- (10.8,6.55);
  		  \draw[thick] (10.6, 6.5) -- (11,6.5);
  		   \draw (10.8, 8.5) node[anchor=west] {$r_{top} = 8$};
  		   
  		   \draw[thick] (5,0) rectangle (10.5,1.5);
  		    %\draw (7.75, 0.75) node {$A_3$};
  		   \draw[thick] (5,1.5) rectangle (5.5,5);
            %\fill[color=lightgray] (5,1.5) rectangle (5.5,5);
           \fill[pattern = north east lines, pattern color = black!80] (5,0) rectangle (10.5,1.5);
  		   
		\end{tikzpicture}}
		\caption{Cases when there is a horizontal line intersecting the interior of $i_1$ and $i_2$ simultaneously. Left: The case $\ell_{top},r_{bot}\ge 8$. Right: The case $\ell_{top},r_{top}\ge 8$.}
	    \label{fig:gapcases}
	\end{figure} 
\end{itemize}

Since all the cases lead to a contradiction, we conclude that there is no geometric packing of the squares into the region. \end{proof}

\section{A 2-approximation for 2D Demand Bin Packing for Square tasks}\label{app:2apxsquare}

In this section, a $2$-approximation algorithm for 2D Demand Bin Packing for square tasks is presented. We will start by proving how to compute a $2$-structured solution for tasks of height larger than $\alpha H$ or width larger than $\beta W$, where $\alpha$ and $\beta$ are constants such that $(1-\alpha) (1-\beta) \ge \frac{1}{2}$. This solution will allow us to include the remaining tasks with the help of Lemma~\ref{lem:smalltasks} and conclude the result. Throughout this section, we make us of Next-Fit Decreasing Height algorithm (NFDH) which is defined in the following:

Let us sort the tasks non-increasingly according to their heights (breaking ties arbitrarily). Then the algorithm works in rounds $j\ge 1$: At the beginning of round $j$, each incoming task is placed one next to the other, starting from the leftmost time slot, until a task does not fit. These tasks define a shelf, which is place as such in the current bin if it fits, and otherwise it is placed into a new empty bin (the previous bin is closed and not visited again. It is not difficult to see that the output of this algorithm consists of bins with sorted load profiles.

\begin{lemma}\label{lem:2epslem}
For any instance $I$, we can find $\alpha$ and $\beta$ with $(1-\alpha) (1-\beta) \ge \frac{1}{2}$ such that we can compute a $2$-structured solution for all the tasks of height larger than $\alpha C$ or width larger than $\beta T$.
\end{lemma}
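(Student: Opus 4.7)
The plan is to generalize Theorem~\ref{thm:2apx-square-square} to rectangular bins by choosing $\alpha$ and $\beta$ tailored to the bin aspect ratio. Assume without loss of generality that $C \le T$ (the other case is symmetric) and set $\alpha = \tfrac{1}{4}$ and $\beta = \tfrac{C}{4T} \le \tfrac{1}{4}$, so that $\alpha C = \beta T = C/4$ and $(1-\alpha)(1-\beta) \ge (\tfrac{3}{4})^2 > \tfrac{1}{2}$. With these thresholds, the \emph{large} tasks are exactly the square tasks with side length strictly larger than $C/4$, and it remains to produce a $2$-structured solution for them.

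To build this solution I would follow the blueprint of Theorem~\ref{thm:2apx-square-square}: partition the large tasks into \emph{tall} tasks (side $> C/2$) and \emph{medium} tasks ($C/4 < \text{side} \le C/2$), and treat each class independently. Tall tasks cannot pairwise share a time slot, so they induce a one-dimensional Bin Packing instance where each item has size equal to its side and the bin size is $T$; First-Fit Decreasing (or the APTAS of~\cite{VL81} when $OPT$ is large) packs them into at most $\lfloor \tfrac{3}{2}\, OPT_{\mathcal{T}}\rfloor + O(1)$ bins, and sorting the tasks of each bin non-increasingly by width yields a sorted load profile for free.

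For the medium tasks I would further split at $C/3$ and apply an NFDH-style shelf algorithm: any $2\lfloor 2T/C\rfloor$ tasks of side in $(C/3, C/2]$ fill one bin as a $2\times \lfloor 2T/C\rfloor$ grid, and any $3\lfloor 3T/C\rfloor$ tasks of side in $(C/4, C/3]$ fill another as a $3\times \lfloor 3T/C\rfloor$ grid; grouping tasks accordingly gives linear bounds on the number of bins in terms of the cardinalities of these subclasses, and the shelf structure makes the resulting load profiles non-increasing. Lower bounds on $OPT$ coming from the same grid counts plus from $OPT(\mathcal{T})$ provide the instance-specific certificates needed to compare against the algorithm's output.

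The main obstacle is to show that the combined number of bins is at most $2\,OPT$, which requires a case analysis in the spirit of Lemma~\ref{lem:opt<70} balancing the optimal sizes of the tall and medium classes: when tall tasks dominate, the $\tfrac{3}{2}$-approximation already controls the total count and medium tasks can be slotted into the slack left by the tall bins; when medium tasks dominate, the grid-based lower bounds on $OPT$ tighten enough to absorb the tall bins within the $2\,OPT$ budget. Once this $2$-structured solution for the large tasks is in place, Lemma~\ref{lem:smalltasks} applied with the above $\alpha$ and $\beta$ extends it to the full instance without incurring additional bins, yielding the claimed result.
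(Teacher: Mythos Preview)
Your proposal identifies the right overall framework (threshold large tasks, build a $2$-structured solution, finish with Lemma~\ref{lem:smalltasks}), but the part you flag as ``the main obstacle'' is precisely where the argument is missing, and the tools you propose do not obviously close it. Packing tall, medium-$1$ and medium-$2$ tasks \emph{separately} with First-Fit Decreasing and fixed grids gives you three sub-packings, each of which is only a constant-factor approximation of the optimum \emph{restricted to its own class}. These class-wise lower bounds on $OPT$ are not additive: a single optimal bin may host one tall square together with several medium ones, so $OPT_{\mathcal{T}}+OPT_{\text{med}_1}+OPT_{\text{med}_2}$ can vastly exceed $OPT$. In particular, FFD on the tall tasks can already consume up to $\tfrac{3}{2}\,OPT$ bins (take $T=2C$ and tall sides $0.7C$ and $0.65C$ in the classical FFD-bad pattern), so you would need the medium grids to use at most $\tfrac{1}{2}\,OPT$ bins in that regime---but your grids are themselves only a $4/3$ or $7/6$ approximation for their class and you give no mechanism to charge their overhead against the slack of the tall bins. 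Invoking ``a case analysis in the spirit of Lemma~\ref{lem:opt<70}'' is not reassuring: that lemma targets a $3$-structured solution, and shaving the factor from $3$ down to $2$ is exactly the difficulty here.

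The paper's proof takes a different route: it does a case analysis on the aspect ratio $T/C$ (four regimes), and in each regime it uses a bespoke construction rather than your uniform FFD-plus-grids scheme. When $T\ge \tfrac{4}{3}C$ (and symmetrically when $T<\tfrac{3}{4}C$) it packs \emph{all} squares of side larger than $C/3$ together via the two-pile construction from Lemma~\ref{lem:shortapx}, then splits the strip of length $OPT\cdot T$ into $OPT$ pieces and shows two bins per piece suffice; there is no separate treatment of ``tall'' versus ``medium''. When $T$ is close to $C$ it distinguishes small and large $OPT$: for small $OPT$ it guesses the optimal assignment of big squares and repacks each bin with NFDH, and for large $OPT$ it uses the \emph{optimal} greedy of Theorem~\ref{thm:2apx-square-square} for squares of side larger than $T/3$ together with the Bin-Packing APTAS for the remaining band. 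Note also that the paper does not fix $(\alpha,\beta)=(\tfrac14,\tfrac{C}{4T})$ uniformly; the thresholds change with the regime (e.g.\ $\alpha=\tfrac13$ when $T\ge\tfrac43 C$), which is what makes the per-case arguments go through.
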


\begin{proof}

We will consider different cases depending on the relation between $C$ and $T$. 

\begin{enumerate}
    \item Consider first the case $T\ge \frac{4}{3}C$. We will prove that there exists a well-structured feasible solution of the squares having height larger than $\frac{1}{3}C$, and then complete the solution via Lemma~\ref{lem:smalltasks}. Notice that the requirements of the lemma would be fulfilled as the tasks remaining to be allocated have height at most $\frac{1}{3}C$ and width at most $\frac{1}{3}C \le \frac{1}{4}T$ and it holds that $\left(1-\frac{1}{3}\right)\left(1-\frac{1}{4}\right)=\frac{1}{2}$.
    
    Similarly as in the proof of Lemma~\ref{lem:shortapx}, it is possible to place all the tasks of height larger than $\frac{1}{3}C$ except for one (let us call it $k$) into a rectangular region of height $C$ and width $OPT\cdot T$ (see Figure~\ref{fig:2epspacking}). Let $I_1$ be the set of tasks in the bottom row (including $k$) and $I_2$ be the tasks in the top row of the construction respectively. We will derive a well-structured feasible solution into at most $2OPT$ bins out of the construction as follows: We will first draw vertical lines at every integer multiple of $C$ so as to divide the region into $OPT$ rectangular regions $R_1, R_2, \dots, R_{OPT}$ of width $T$ and height $C$. Consider now the tasks intersecting some region $R_i$, we will place into one bin all tasks from $I_1$ completely contained in $R_i$ together with the task from $I_2$ intersected by the left boundary of $R_i$ (if any) using NFDH, and in another bin the task from $I_1$ intersected by the right boundary of $R_i$ together with the tasks from $I_2$ completely contained in $R_i$ using NFDH. It is not difficult to see that NFDH can pack the tasks into the corresponding bins actually using at most two shelves each time as even if the first shelf had only tasks from $I_1$ (in each corresponding case) the second shelf still fits on top. This way we obtain the required solution by iterating the previous procedure on all the bins from left to right, noticing that every task was assigned to some bin as no task intersects the left boundary of $R_1$ and no task from $I_2$ intersects the right boundary of $R_{OPT}$.
    
    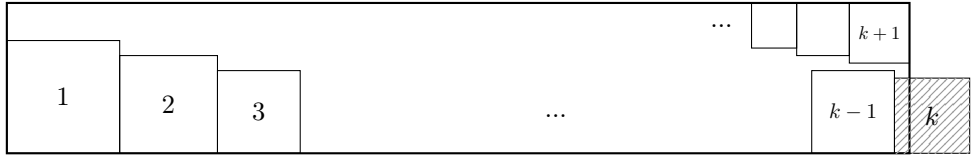
\begin{figure}
	    \centering
		\begin{tikzpicture}
		  \draw[thick] (0,0) rectangle (12,2);
  		  \draw (0,0) rectangle (1.5,1.5);
  		   \draw (1.5,0) rectangle (2.8,1.3);
  		    \draw (2.8,0) rectangle (3.9,1.1);
  		   \draw (11.8,0) rectangle (12.8,1);
  		  \fill[pattern = north east lines, pattern color = gray!80] (11.8,0) rectangle (12.8,1);
			\draw (11.8,1.1) rectangle (10.7,0);
            \draw (7.3, 0.5) node {$...$};
\draw (11.2,2) rectangle (12,1.2);
\draw (10.5,2) rectangle (11.2,1.3);
\draw (9.9,2) rectangle (10.5,1.4);
\draw (9.5, 1.7) node {$...$};
\draw (0.75, 0.75) node {1};
\draw (2.15, 0.65) node {2};
\draw (3.35, 0.55) node {3};
\draw (12.3, 0.5) node {$k$};
\draw (11.6, 1.6) node[scale=0.7] {$k+1$};
\draw (11.25, 0.55) node[scale=0.8] {$k-1$};
		\end{tikzpicture}
		\caption{Case 1 of Lemma~\ref{lem:2epslem}, when $T \ge \frac{4}{3}C$. A $2$-structured solution can be derived from the depicted packing.}
		\label{fig:2epspacking}
			\end{figure} 
    
    \item Consider now the case $C \le T < \frac{4}{3}C$. Our goal is to compute a well-structured feasible solution of the tasks having width larger than $\frac{1}{4}T$, as then the remaining tasks can be allocated using Lemma~\ref{lem:smalltasks} since their height will be at most $\frac{1}{4}T \le \frac{1}{3}V$ and their width will be at most $\frac{1}{4}T$, and it holds that $\left(1-\frac{1}{4}\right)\left(1-\frac{1}{3}\right) = \frac{1}{2}$.
    
    Let $\varepsilon=1/55$ and suppose that $OPT\le \frac{1}{\varepsilon}$. Since tasks are squares, in this case the number of tasks having width larger than $\frac{1}{4}T$ in the whole instance is a constant as at most nine of them can be allocated in each bin. Hence we can simply guess the assignment of these tasks into the bins optimally. If in this assignment a bin is $\frac{1}{2}$-full we leave it as it is, otherwise we will rearrange it using NFDH. We claim that his procedure returns an assignment into at most two bins: Indeed, if two bins are not sufficient, this means that we created shelves defined by tasks $i_1, i_2,...,i_k$ and some task $i_{k+1}$ does not fit neither in the last shelf nor on top of it in the second bin (notice that only $i_1$ can have width larger than $\frac{1}{2}T$). The total allocated area then would be at least \[h_{i_2}(T - w_{i_2}) + h_{i_3}(T - w_{i_3})+...+h_{i_{k+1}}(W - w_{i_{k+1}}) \ge (h_{i_2} + h_{i_3} +...+ h_{i_{k+1}}) \frac{1}{2}T > \frac{1}{2}TC\] which is a contradiction. This procedure then returns a $2$-structured solution for the tasks.
    
    If on the other hand $OPT\ge \frac{1}{\varepsilon}$, We will pack the tasks of width larger than $\frac{1}{3}T$ in a similar fashion as Theorem~\ref{thm:2apx-square-square}, and then place the tasks having width between $\frac{1}{4}T$ and $\frac{1}{3}T$ with the help of the APTAS for Bin Packing~\cite{VL81}. 
    
    More in detail, we will pack the tasks having width larger than $\frac{1}{3}T$ as follows: We initially sort the tasks non-increasingly by height, and for each task in this order we try to place it in any of the current bins as much to the left as possible; if the task does not fit in any bin, we open a new one and allocate the task there. Notice that in the optimal solution restricted to these tasks, each bin has at most four tasks, at most one of them having width larger than $\frac{1}{2}T$, and it is always possible to partition the slots into disjoint intervals such that each task occupies slots of only one side. 
    
    We will prove by induction on the number of tasks $k$ having width between $\frac{1}{3}T$ and $\frac{1}{2}T$ that this procedure is optimal. If $k=0$, we use the optimal number of bins as tasks having width larger than $\frac{1}{2}T$ must go to different bins. Now, for the inductive step, consider the list of tasks and remove the smallest task $j$. Let $OPT_{k+1}$ denote the optimal number of bins for the whole instance and $OPT_{k}$ for the instance without task $j$. By induction, for the remaining tasks we use $OPT_{k}$ bins. If task $j$ fits in one of the aforementioned intervals, we place it and get $OPT_{k} \le OPT_{k+1}$ bins, proving the claim. If it does not fit in any possible interval then no other task from the list fits in these regions; indeed, if an interval does not have tasks on it, no task from the list fits here because they are too wide. If an interval contains one task, no task from the list fits together with this task in any packing. Finally, if an interval contains two tasks it is not possible to fit any other task from the list in there. This implies that $OPT_{k+1} > OPT_{k}$ and hence opening a new bin is actually required.
			
	Notice that if a bin contains four tasks, either they all have width between $\frac{1}{3}T$ and $\frac{1}{2}T$ and we can make the load profile non-increasing, or one of them has width larger than $\frac{1}{2}T$ and then the total area of the tasks is at least $\frac{1}{4}TC + 3 \cdot \frac{1}{9}TC > \frac{1}{2}TC$. If a bin contains three tasks, either we can put one task on top of the tallest one at the left to achieve a non-increasing profile or its total area is larger than $\frac{1}{2}TC$ as the three tasks can cover the left half of the bin. If a bin contains one or two tasks, the load profile can always be made non-increasing. 
	
	For tasks having width between $\frac{1}{4}T$ and $\frac{1}{3}T$, we notice that in the optimal solution the slots of the bins can always be partitioned into three intervals of width $\frac{1}{3}T$ where tasks lie. Every task is intersected by at least one of the vertical lines at positions $\frac{1}{4}T, \frac{1}{2}T$ and $\frac{3}{4}T$. Tasks intersected by the first line can be moved to the leftmost interval such that it contains them completely, then among the remaining tasks we can move to the second intervals the ones intersected by the second line and similarly for the rest (see Figure~\ref{fig:2eps3reg}). These intervals induce then a Bin Packing instance of optimal value at most $3OPT$ for the tasks, which we can solve almost optimally using the APTAS for Bin Packing. However, we need to carefully bound the number of used bins in total (including the first phase for tasks of width larger than $\frac{1}{3}T$). 
	
	If the first phase ends up using at most $(1-2\varepsilon)OPT$ bins, then since the number of bins obtained from the APTAS is at most $(1+2\varepsilon)OPT$ we obtain the claimed solution. If on the other hand the first phase uses more than $(1-2\varepsilon)OPT$ bins, we will show then that the tasks having width between $\frac{1}{4}T$ and $\frac{1}{3}T$ can be arranged into $\left(\frac{8}{9}+2\varepsilon\right)OPT$ bins only, which would imply that by using the APTAS we obtain at most $OPT$ bins for these tasks (as $\varepsilon\le 1/55$) and consequently at most $2OPT$ bins in total as required. Indeed, if the first phase uses more than $(1-2\varepsilon)OPT$ bins, this means in particular that in the optimal solution at least $(1-2\varepsilon)OPT$ have tasks of width larger than $\frac{1}{3}T$. Consider nine such bins arbitrarily, we can remove all the tasks from $I_2$ there and reallocate them into the other eight bins simply by allocating them where a task from $I_1$ was allocated. By repeating this procedure we can reallocate the tasks from $I_2$ into at most $\left(\frac{8}{9} + 2\varepsilon\right)OPT$ bins, proving the claim. Finally, if a bin gets 6 or less tasks, we can always sort to get a non-increasing load profile. Otherwise, if NFDH does not pack the tasks, then their area is at least $\frac{1}{2}TC$ as the three tasks defining the shelves have total height larger than $C$ and the remaining at least four tasks will also have total height larger than $C$. This proves the claim for this case.
	
				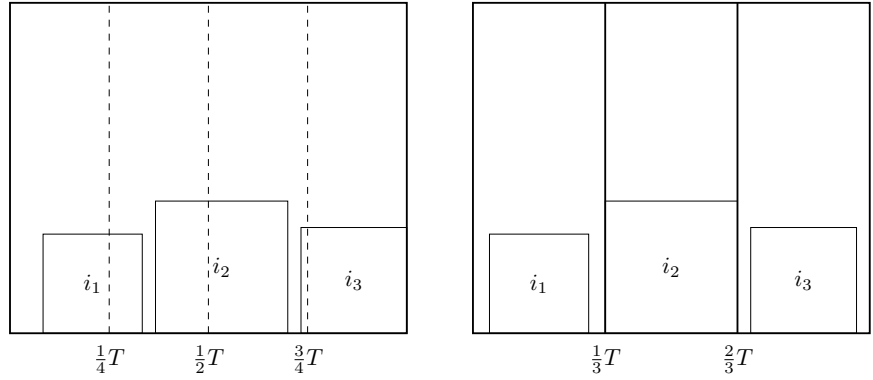
\begin{figure}
	    \centering
	    \resizebox{0.7\textwidth}{!}{
		\begin{tikzpicture}
		  \draw[thick] (1,0) rectangle (7,5);

  		     		  \draw[dashed] (2.5, 0) -- (2.5, 5);
  		              \draw[dashed] (4, 0) -- (4, 5);
  		     		  \draw[dashed] (5.5, 0) -- (5.5, 5);
                     \draw (2.5, -0.4) node {$\frac{1}{4}T$};
                     \draw (4, -0.4) node {$\frac{1}{2}T$};
                     \draw (5.5, -0.4) node {$\frac{3}{4}T$};

		  \draw (1.5,0) rectangle (3,1.5);
		  \draw (3.2,0) rectangle (5.2,2);
		  \draw (5.4,0) rectangle (7,1.6);
                      \draw (2.25, 0.75) node {$i_1$};
                     \draw (4.2, 1) node {$i_2$};
                     \draw (6.2, 0.8) node {$i_3$};

  		   \draw[thick](10, 0) -- (10, 5);
  		              \draw[thick] (12, 0) -- (12, 5);
                     \draw (10, -0.4) node {$\frac{1}{3}T$};
                     \draw (12, -0.4) node {$\frac{2}{3}T$};
                     
        \draw (8.25,0) rectangle (9.75,1.5);
		  \draw (10,0) rectangle (12,2);
		  \draw (12.2,0) rectangle (13.8,1.6);
                     \draw (9, 0.75) node {$i_1$};
                     \draw (11, 1) node {$i_2$};
                     \draw (13, 0.8) node {$i_3$};

\draw[thick] (8,0) rectangle (14,5);
		\end{tikzpicture}}
		\caption{Depiction of how square tasks of width larger than $\frac{1}{4}T$ and at most $\frac{1}{3}T$ can be rearranged into three disjoint intervals of slots having total width $\frac{1}{3}T$. This way each interval induces a one-dimensional bin.}
		\label{fig:2eps3reg}
			\end{figure} 
			
\item The case $\frac{3}{4}C < T \le C$ can be addressed in a similar way to the previous case so as to allocate tasks having height larger than $\frac{1}{4}C$. This way, the remaining tasks to be allocated will have height at most $\frac{1}{4}C$ and width at most $\frac{1}{4}T \le \frac{1}{3}C$ and it holds that $\left(1-\frac{1}{4}\right)\left(1-\frac{1}{3}\right) = \frac{1}{2}$. If $OPT \le \frac{1}{\varepsilon}$, we guess the optimal packing of big tasks by brute force, and rearrange each bin that is not $\frac{1}{2}$-full using NFDH into at most two bins. This is possible since if a task $i_{k+1}$ does not fit, then the total area of tasks placed is at least \[h_{i_2}(T - w_{i_2}) + h_{i_3}(T - w_{i_3})+...+h_{i_{k+1}}(T - w_{i_{k+1}}) \ge \left( \sum_{n \ge 2: w_{i_n} \le \frac{1}{2}T} h_{i_n
} \right) \frac{1}{2}T \ge  \frac{1}{2}TC,\] where the tasks $i_1, i_2, \dots, i_{k+1}$ are the tasks defining the shelves. The only difference with the previous analysis is that now there can be more than one task of width larger than $\frac{1}{2}T$, but their total height cannot be larger than $C$ as they fit together in the bin. If on the other hand $OPT>\frac{1}{\varepsilon}$, we can again allocate optimally the tasks of height larger than $\frac{1}{3}C$ and the remaining tasks with the help of the APTAS (this can be done analogously as before by rotating the bins by $90$ degrees). This time it is easier to obtain non-increasing load profiles for the bins which are not $\frac{1}{2}$-full as they are naturally partitioned into shelves.

\item Finally, the case $T< \frac{3}{4}C$ can be addressed similarly as in the first case. Our goal will be to allocate all the tasks of width larger than $\frac{1}{3}T$ in an initial phase, as then the remaining tasks would have width at most $\frac{1}{3}T$ and height at most $\frac{1}{3}T \le \frac{1}{4}C$ and it holds that $\left(1-\frac{1}{3}\right)\left(1-\frac{1}{4}\right) = \frac{1}{2}$. We can apply the same construction as in the first case but this time to allocate all the tasks except for one into a region of width $T$ and height $OPT\cdot C$. This packing is decomposed into two piles, one to the left where tasks are packed in non-increasing order of width starting from the bottom, and one to the right which is packed in non-increasing order of width starting from the top. Furthermore, each task in the left pile is larger in width and height compared to any task in the left pile. We can then draw horizontal lines at each possible multiple of $C$ to induce rectangular regions of height $C$ and width $T$.

Analogously to the first case, we will consider all the obtained regions from bottom to top and rearrange the set of tasks intersecting the region into two bins as follows: We allocate into one bin all the tasks from the left pile completely contained in the region plus the task from the right pile intersecting the lower boundary of the region (if any), and allocate into another bin all the tasks from the right pile completely contained in the region plus the task from the left pile intersecting the upper boundary (if any, including the task which is not packed as if it were intersecting the topmost boundary). This way we obtain a solution for all the tasks that uses at most $2OPT$ bins and we will show how to obtain non-increasing load profiles. Bins containing the tasks from the left pile completely contained plus the task from the right pile intersecting the lower boundary can be rearranged by simply allocating the task from the right pile next to any task from the left pile which as if they were in the same shelf (this does not induce conflicts as tasks in the right pile are smaller in height than any task in the left pile). For each of the remaining bins we will rearrange the tasks inside using NFDH. Notice that the width of the task from the left pile plus the width of any task from the right pile in the bin is at most $T$, meaning that each shelf will have exactly two tasks, that the width of the task in the right pile is at most $\frac{2}{3}T$ and consequently that the total height of tasks in the bin is at most $C+\frac{2}{3}T \le \frac{3}{2}C$. If NFDH does not allocate all the tasks in the bin, then the total height of tasks allocated is at least $h_{i_1} + 2(h_{i_2} + h_{i_3} + \dots + h_{i_{k+1}})$, where $i_1, \dots, i_k$ are the tasks defining the shelves and $i_{k+1}$ is the task that did not fit. We conclude by noticing that \[h_{i_2} + h_{i_3} + \dots + h_{i_{k+1}} > C-h_{i_1},\] implying that the total height of the tasks in the bin is strictly larger than $2C - h_{i_1} \ge 2C - \frac{2}{3}T \ge \frac{3}{2}C$ which is a contradiction. This concludes the proof.\qedhere
\end{enumerate}
\end{proof}

\begin{theorem}\label{thm:2apx-square-gen} There is a $2$-approximation for the 2D Demand Bin Packing problem for Square Tasks. \end{theorem}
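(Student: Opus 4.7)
The plan is to chain together the two key ingredients developed in this appendix: Lemma~\ref{lem:2epslem}, which produces a $2$-structured allocation for the ``large'' tasks of the instance under an appropriate choice of thresholds $(\alpha,\beta)$, and Lemma~\ref{lem:smalltasks}, which converts such a structured partial solution into a full allocation for the whole instance without opening any additional bin, provided the threshold relation $(1-\alpha)(1-\beta)\ge \frac{1}{k}$ is satisfied with $k=2$.

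Concretely, given an instance $I$ of 2D Demand Bin Packing for square tasks, I would first invoke Lemma~\ref{lem:2epslem} to obtain thresholds $\alpha,\beta$ with $(1-\alpha)(1-\beta)\ge \frac{1}{2}$ together with a $2$-structured packing of the subset of tasks $\{i : h_i > \alpha C \text{ or } w_i > \beta T\}$ into at most $2\cdot OPT$ bins. The remaining unpacked tasks satisfy $h_i \le \alpha C$ and $w_i \le \beta T$ simultaneously, which is exactly the hypothesis needed to apply the First-Fit inspired algorithm of Lemma~\ref{lem:smalltasks} with $k=2$, $\delta_h=\alpha$, $\delta_w=\beta$. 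Since the structured part already uses at most $2\cdot OPT$ bins and Lemma~\ref{lem:smalltasks} guarantees that no additional bins are opened, the final allocation uses at most $2\cdot OPT$ bins, yielding the desired $2$-approximation. The entire procedure runs in polynomial time as each of the subroutines invoked inside Lemma~\ref{lem:2epslem} (guessing of constantly many large tasks when $OPT\le 1/\varepsilon$, NFDH, the APTAS of~\cite{VL81}) and the First-Fit inspired algorithm of Lemma~\ref{lem:smalltasks} do.

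There is essentially no further obstacle in this final step: all the technical work has been absorbed by Lemma~\ref{lem:2epslem}, whose four-case analysis covers every possible aspect ratio between $T$ and $C$ and in each case produces both the appropriate pair $(\alpha,\beta)$ and a packing of large tasks with sorted load profiles (or bins whose total area exceeds $TC/2$). The matching hardness from Appendix~\ref{app:hardness} then certifies that the obtained ratio is best possible unless P=NP.
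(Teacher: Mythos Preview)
Your proposal is correct and essentially identical to the paper's own proof: invoke Lemma~\ref{lem:2epslem} to obtain thresholds $(\alpha,\beta)$ with $(1-\alpha)(1-\beta)\ge \tfrac{1}{2}$ and a $2$-structured packing of the large tasks, then apply Lemma~\ref{lem:smalltasks} with $k=2$ to absorb the remaining small tasks without exceeding $2\cdot OPT$ bins. The extra remarks you add on polynomial running time and the matching lower bound are fine but not strictly needed.
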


\begin{proof}
    Thanks to Lemma~\ref{lem:2epslem}, there always exist parameters $\alpha,\beta\ge 0$ such that a $2$-structured solution for tasks having height larger than $\alpha C$ or width larger than $\beta T$ can be computed, satisfying that $(1-\alpha)(1-\beta) \ge \frac{1}{2}$. The remaining tasks can be placed on top by means of Lemma~\ref{lem:smalltasks} using at most $2OPT$ bins, obtaining the desired solution. \end{proof}

\end{document}